\newtheorem{theorem}{Theorem}
\newtheorem{definition}{Definition}
\newtheorem{lemma}{Lemma}
\newtheorem{corollary}{Corollary}
\definecolor{NG}{rgb}{0.66,0.29,0.48}
\begin{document}
\let\oldacl\addcontentsline
\renewcommand{\addcontentsline}[3]{}
\newcommand{\Tr}{\mathrm{Tr}}
\newcommand{\szx}[1]{\textcolor{blue}{(szx: #1)}}
\newcommand{\add}[1]{\textcolor{blue}{#1}}
\newcommand{\qi}[1]{\textcolor{red}{(qi: #1)}}
\title{Design nearly optimal quantum algorithm for linear differential equations via Lindbladians}
\author{Zhong-Xia Shang}
\email{shangzx@hku.hk}
\affiliation{HK Institute of Quantum Science $\&$ Technology, The University of Hong Kong, Hong Kong, China}
\affiliation{QICI Quantum Information and Computation Initiative, Department of Computer Science,
The University of Hong Kong, Hong Kong, China}
\author{Naixu Guo}
\email{naixug@u.nus.edu}
\affiliation{Centre for Quantum Technologies, National University of Singapore, 117543, Singapore}
\author{Dong An}
\email{dongan@pku.edu.cn}
\affiliation{Beijing International Center for Mathematical Research, Peking University, Beijing, China}
\author{Qi Zhao}
\email[]{zhaoqcs@hku.hk}
\affiliation{QICI Quantum Information and Computation Initiative, Department of Computer Science,
The University of Hong Kong, Hong Kong, China}
\begin{abstract}
Solving linear ordinary differential equations (ODE) is one of the most promising applications for quantum computers to demonstrate exponential advantages. The challenge of designing a quantum ODE algorithm is how to embed non-unitary dynamics into intrinsically unitary quantum circuits. In this work, we propose a new quantum algorithm for solving ODEs by harnessing open quantum systems. Specifically, we propose a novel technique called non-diagonal density matrix encoding, which leverages the inherent non-unitary dynamics of Lindbladians to encode general linear ODEs into the non-diagonal blocks of density matrices. This framework enables us to design quantum algorithms with both theoretical simplicity and high performance. Combined with the state-of-the-art quantum Lindbladian simulation algorithms, our algorithm can outperform all existing quantum ODE algorithms and achieve near-optimal dependence on all parameters under a plausible input model. We also give applications of our algorithm including the Gibbs state preparations and the partition function estimations.
\end{abstract}
\maketitle

\noindent\textit{\textbf{Introduction.—}}
Differential equations \cite{hartman2002ordinary} have long served as an essential tool for modeling and describing the dynamics of systems in both natural and social sciences. A general linear ordinary differential equation (ODE) can be written as
\begin{equation}\label{odedef}
\frac{d}{dt}\vec{\mu}(t) =-V(t)\vec{\mu}(t)+\vec{b}(t),
\end{equation}
where $V(t)\in \mathbb{C}^{2^n\times 2^n}$, $\vec{\mu}(t),\vec{b}(t)\in \mathbb{C}^{2^n}$, and the initial condition is given by $\vec{\mu}(0)=\vec{\mu}_0$. Classical simulation algorithms often become highly inefficient for large systems due to their polynomial scaling with respect to the system dimension. 
In contrast, quantum algorithms \cite{berry2014high,berry2017quantum,childs2020quantum,krovi2023improved,berry2024quantum,fang2023time,an2023linear,jin2022quantum,an2023quantum,low2024quantum} with appropriate input access can generate a quantum state encoding the solution of the ODE with only poly-logarithmic dependence on the system dimension. This remarkable efficiency makes solving ODE a promising application of quantum computers. 
In particular, the Hamiltonian simulation \cite{seth1996universal, Childs_2009, childs2012hamiltonians, berry2015simulating, low2017optimal, low2019hamiltonian}, which aims to simulate the Schrödinger equation, a special case of ODEs, on quantum computers, is arguably one of the most significant applications and may be among the first to demonstrate practical quantum advantages.

Since $V(t)$ is generally not anti-Hermitian, the time evolution operator for the ODE is not necessarily unitary. This is the case in many important problems, such as those arising in non-Hermitian physics~\cite{Bender2007,RegoMonteiroNobre2013,GiusteriMattiottiCelardo2015,yao2018edge,el2018non,GongAshidaKawabataEtAl2018,kawabata2019symmetry,okuma2020topological,AshidaGongUeda2020,MatsumotoKawabataAshidaEtAl2020,bergholtz2021exceptional,DingFangMa2022,ChenSongLado2023,ZhengQiaoWangEtAl2024,ShenLuLadoEtAl2024} and fluid dynamics~\cite{Batchelor_2000,evans2010partial,widder1976heat,cannon1984one,hundsdorfer2003numerical,thambynayagam2011diffusion}. The central challenge in designing quantum algorithms for linear ODEs is how to embed non-unitary dynamics into intrinsically unitary quantum dynamics. When $V(t)$ is a time-independent normal matrix and $\vec{b}(t) =\vec{0}$, the problem of solving an ODE can be efficiently addressed by the powerful quantum singular value transformation (QSVT) algorithm~\cite{gilyen2019quantum}. However, for ODEs with a time-dependent non-normal matrix $V(t)$ and a possibly inhomogeneous term $\vec{b}(t) \neq\vec{0}$, QSVT is no longer applicable due to the mismatch between the singular value transformation and the eigenvalue transformation.

Previous works have developed two main strategies for solving general linear ODEs. The first is the linear-system-based approach, which discretizes the ODE by a numerical scheme, reformulates the discretized ODE as a dilated linear system of equations, and solves the linear system by quantum linear system algorithms. 
The first efficient linear-system-based approach was proposed by Berry in~\cite{berry2014high}, applying multi-step methods for time discretization and the HHL algorithm~\cite{HarrowHassidimLloyd2009} for solving the resulting linear system. 
Since then, several subsequent works~\cite{berry2017quantum,childs2020quantum,krovi2023improved,berry2024quantum} have advanced the linear-system-based approach by employing higher-order time discretization and more advanced quantum linear system algorithms~\cite{Ambainis2012,ChildsKothariSomma2017,SubasiSommaOrsucci2019,LinTong2020,AnLin2022,CostaAnYuvalEtAl2022,Dalzell2024}. The second strategy is the evolution-based approach, which directly embeds the time evolution operator into the subspace of an efficiently implementable unitary by time-marching~\cite{fang2023time}, reducing to Hamiltonian simulation problems~\cite{jin2022quantum,an2023linear,an2023quantum}, or quantum eigenvalue processing~\cite{low2024quantum}.

Beyond these methodologies, it remains an intriguing challenge to develop quantum ODE algorithms with improved efficiency by reducing the ODE to other quantumly solvable tasks. For instance, it is rather tempting to use natural non-unitary dynamics of open quantum systems to solve non-unitary ODEs, such as the Lindblad master equation (Lindbladian) \cite{lindblad1976generators}. Unlike previous evolution-based algorithms where non-unitary dynamics is embedded into the \textbf{subspace} of unitaries (block encoding \cite{low2019hamiltonian}), dynamics of \textbf{subsystems} described by Lindbladians are naturally non-unitary due to the interaction with the environment. Lindbladian itself has been proven to be a universal quantum computing model \cite{verstraete2009quantum,chen2024local}. Recently, there have been several works proposing quantum algorithms based on Lindbladians, where they aim to encode states \cite{guo2024designing} of interest such as Gibbs states \cite{chen2023quantum,chen2023efficient,ding2024efficient,gilyen2024quantum} and ground states \cite{ding2024single,shang2024polynomial} as the steady states of Lindbladians. Inspired by these rapid developments, our question arises: 
\begin{center}
\textit{Is it possible to solve linear ODEs via Lindbladians?}
\end{center}
Since there already exist various quantum algorithms for simulating Lindbladians~\cite{kliesch2011dissipative,cleve2016efficient,childs2017efficient, schlimgen2021quantum,li2023opensystem,pocrnic2023quantum,he2024efficientoptimalcontrolopen, ding2024simulating}, which makes it rather amenable to solve linear ODEs via Lindbladians on quantum computers as long as we can establish a connection between them. 

In this work, we provide a positive answer to the above question. 
We demonstrate how general linear ODEs can be embedded into Lindbladians with the aid of a new technique called non-diagonal density matrix encoding. 
Based on this result, we develop an efficient quantum algorithm for solving ODEs by leveraging state-of-the-art Lindbladian simulation quantum algorithms \cite{li2023opensystem,he2024efficientoptimalcontrolopen}, considering two different input models for the coefficient matrix $V(t)$. Notably, under the second input model, our algorithm outperforms all existing quantum algorithms and achieves near-optimal dependence on all parameters. We would like to note that Ref.~\cite{kasatkin2023differential} also explores connections between ODEs and Lindbladians, using different techniques and mappings to show that any ODE can be associated with a Markovian quantum master equation, and providing conditions under which the master equation is Lindbladian. Therefore, the novelty of our work lies mainly in achieving a computational breakthrough, rather than in the observation of representability.

\begin{figure*}[htpb]
\includegraphics[width=0.65\linewidth]{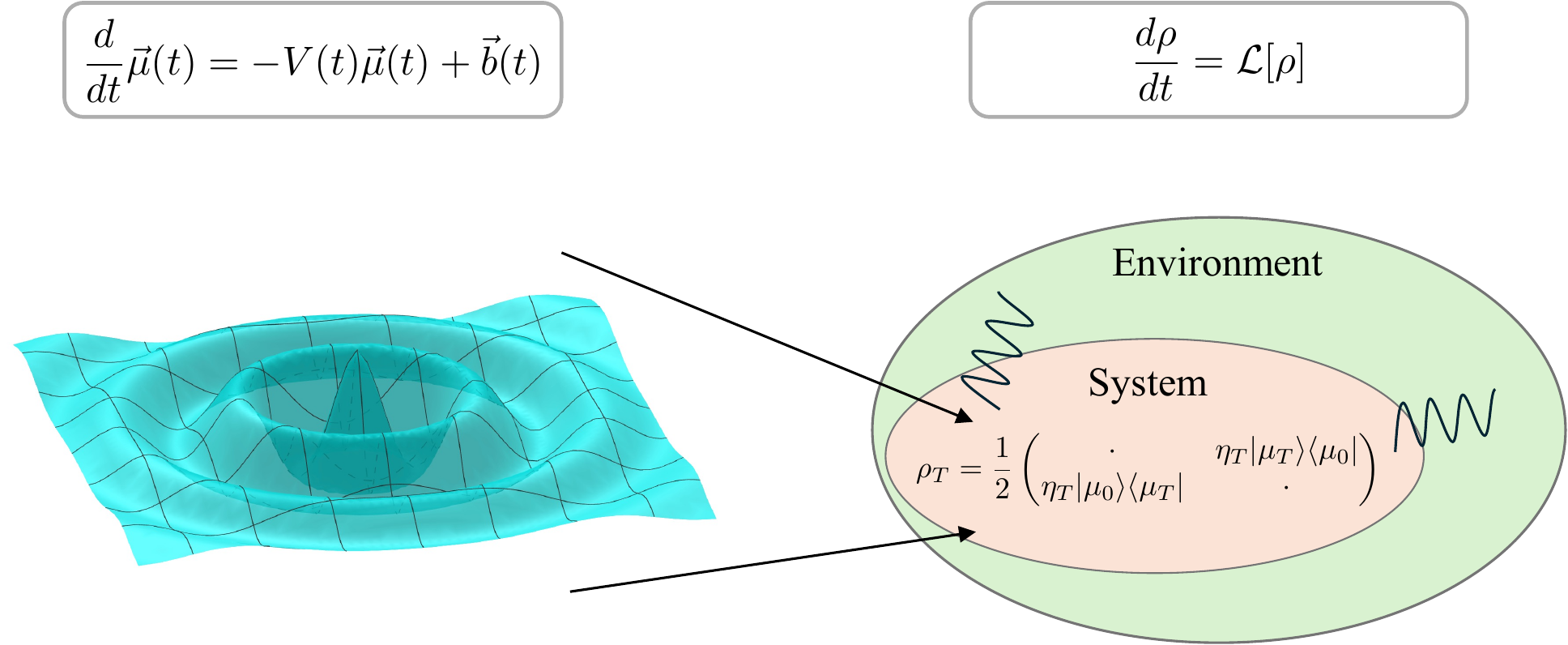}
\caption{High-level illustration of the algorithm. The dynamics of a linear ODE is embedded into the intrinsic non-unitary dynamics of a Lindbladian and the solution to the ODE is encoded into a non-diagonal block of the density matrix evolved according to the Lindbladian.
\label{fig1}}
\end{figure*}
\noindent\textit{\textbf{Algorithm overview.—}}
If a system is coupled to an environment that is sufficiently large for the Markovian approximation to hold, the dynamics of the system can be modeled by a Lindbladian \cite{lindblad1976generators}
\begin{eqnarray}\label{lmee}
&&\frac{d\rho}{dt}=\mathcal{L}[\rho]=-i[H(t),\rho]+\nonumber\\&&
\sum_i \left(F_i(t)\rho F_i(t)^\dag-
\frac{1}{2}\{\rho,F_i(t)^\dag F_i(t)\}\right),
\end{eqnarray}
where $\rho=\sum_{ij}\rho_{ij}|i\rangle\langle j|$ is the density matrix of the system, $H(t)$ is the internal Hamiltonian, and $\{F_i(t)\}$ are quantum jump operators. By treating $\rho$ as $\vec{u}(t)$ in ODE, the Lindbladian naturally corresponds to an ODE with a generally non-normal $V(t)$. However, the Lindbladian has to be a completely positive trace-preserving (CPTP) map \cite{nielsen2010quantum} and cannot be used to encode general ODE problems.

To address this issue, we propose a technique called non-diagonal density matrix encoding (NDME), firstly introduced in Ref \cite{shang2024estimating}. 
Similar to the idea of block encoding where a matrix $M$ of interest is encoded into the upper-right block of a unitary operator \cite{low2019hamiltonian} (also see the definition in SM \cite{supp} \ref{sec:qla}), NDME aims to encode the matrix $M$ into a non-diagonal block of a density matrix to jump out of the restrictions of Hermiticity and positive semi-definiteness:
\begin{definition}[Non-diagonal density matrix encoding (NDME)]
Given an $(l+n)$-qubit density matrix $\rho_M$ and an $n$-qubit matrix $M$, if $\rho_M$ satisfies
$(\langle s_1|_l\otimes I_n) \rho_M (|s_2\rangle_l\otimes I_n)=\gamma M$, where $|s_1\rangle_l$ and $|s_2\rangle_l$ are two different computational basis states of the $l$-qubit ancilla system and $\gamma\geq 0$, then $\rho_M$ is called an $(l+n,|s_1\rangle,|s_2\rangle,\gamma)$-NDME of $M$. 
\end{definition}  
\noindent In this work, we mainly focus on $(1+n,|0\rangle,|1\rangle,\gamma)$-NDME where $\rho_M$ has the form $
\rho_M=\begin{pmatrix}
\cdot & \gamma M \\
\gamma M^\dag & \cdot
\end{pmatrix}$. Since $\rho_M$ is Hermitian, it is also a $(1+n,|1\rangle,|0\rangle,\gamma)$-NDME of $M^\dag$.

We now show how to combine Lindbladians and NDME to encode general linear ODEs. Starting from a $(1+n)$-qubit initial state $\rho_0=|+\rangle\langle +|\otimes |\mu_0\rangle\langle\mu_0|$ which is a $(1+n,|0\rangle,|1\rangle,\frac{1}{2})$-NDME of $|\mu_0\rangle\langle \mu_0|$, we consider a Lindbladian of form Eq. \ref{lmee} with $H(t)=\begin{pmatrix}
H_1(t) & 0 \\
0 & \alpha I_n
\end{pmatrix}$ and $F_i(t)=\begin{pmatrix}
G_i(t) & 0 \\
0 & \beta_i I_n
\end{pmatrix}$. Focusing on the upper-right block of $\rho_0$, $\mathcal{L}[\rho_0]$ gives
\begin{eqnarray}
&&\frac{1}{2}|\mu_0\rangle\langle \mu_0|\rightarrow -i(H_1(t)-\alpha)\frac{1}{2}|\mu_0\rangle\langle \mu_0|-\\&&\left(\frac{1}{2}\sum_i G_i(t)^\dag G_i(t) -\sum_i\beta_i^*G_i(t)+\frac{1}{2}\sum_i|\beta_i|^2\right)\frac{1}{2}|\mu_0\rangle\langle \mu_0|.\nonumber
\end{eqnarray}
Here, since the $\alpha$ term can be absorbed into $H_1(t)$ and $\frac{1}{2}\sum_i G_i(t)^\dag G_i(t)$ is positive semi-definite, for semi-dissipative ODEs (the Hermitian part of $V(t)$ is positive semi-definite), we can simply set $\alpha=\beta_i=0$ such that on the left vector of this upper-right block, we realize the homogeneous linear ODE
\begin{eqnarray}\label{ooo}
\frac{d \vec{\mu}(t)}{dt}=-V(t)\vec{\mu}(t)=\left(-i H_1(t)-\frac{1}{2}\sum_i G_i(t)^\dag G_i(t)\right) \vec{\mu}(t),\nonumber\\
\end{eqnarray}
with $\vec{\mu}(0)=|\mu_0\rangle$. Eq.~\eqref{ooo} shows the interesting but expected connection between the anti-Hermitian (oscillating) part of $V(t)$ and the internal Hamiltonian of the Lindbladian, and the connection between the Hermitian (dissipative) part of $V(t)$ and the environment-induced jump operators of the Lindbladian. Evolving the Lindbladian for a time $T$, we will have
\begin{eqnarray}\label{rhott}
\rho_{T}=\frac{1}{2}\begin{pmatrix}
\cdot & \eta_T|\mu_T\rangle\langle \mu_0| \\
\eta_T|\mu_0\rangle\langle \mu_T| & \cdot
\end{pmatrix},
\end{eqnarray}
where $|\mu_T\rangle=\vec{\mu}(T)/\eta_T$ is the normalized solution with $\eta_T=\|\vec{\mu}(T)\|$. A high-level picture of these procedures is shown in Fig. \ref{fig1}. More details can be found in SM \cite{supp} \ref{sec: slovl}.

In the above discussions, we noted that it is possible to set $\beta_i\neq0$ for non-dissipative ODEs. However, we will focus on the semi-dissipative case because it has been shown that there is a no-go theorem with exponential worst-case complexity for the non-dissipative case \cite{an2023theoryquantumdifferentialequation}, which states that a generic quantum algorithm for ODEs with growing components must exhibit worst-case complexity of 
$e^{\Omega(T)}$. Such exponential scaling is considered inefficient. Consequently, existing quantum algorithms for ODEs are typically designed for the semi-dissipative case, and our work follows this established convention.

Having prepared $\rho_{T}$, we can further extract the information of $\vec{\mu}(T)$ depending on our purposes. Since the information of $\vec{\mu}(T)$ has already been encoded into $\rho_T$, we can perform measurements on $\rho_T$ to acquire properties of $\vec{\mu}(T)$ such as $\eta_T\langle \mu_0|\mu_T\rangle$ known as the Loschmidt echo \cite{goussev2012loschmidt} and the expectation value $\eta_T^2\langle \mu_T|O|\mu_T\rangle$ with respect to some observable $O$. We give concrete measurement strategies in SM \cite{supp} \ref{sec:mp}. 
With the aid of amplitude estimation algorithms \cite{brassard2002quantum,aaronson2020quantum,grinko2021iterative}, we can use $\mathcal{O}(\eta_T^{-1}\varepsilon^{-1})$ queries to the preparation oracle of $|S_{T}\rangle$, which is the purification of $\rho_T$, to give a $\eta_T\varepsilon$ additive estimation of $\eta_T\langle \mu_0|\mu_T\rangle$, and we can use $\mathcal{O}(\eta_T^{-2}\varepsilon^{-1})$ queries to the preparation unitary of $|S_{T}\rangle$ to give a $\eta_T^2\varepsilon$ additive estimation of $\eta_T^2\langle \mu_T|O|\mu_T\rangle$. We want to emphasize that the purification state $|S_T\rangle$ can be directly obtained from the Lindbladian simulation algorithms \cite{li2023opensystem,he2024efficientoptimalcontrolopen} we adopted in this work. Even if we only have access to $\rho_T$, we can still measure these properties by Hadamard tests \cite{aharonov2006polynomial} with a standard quantum limit scaling ($\varepsilon^{-2}$).

If our purpose is to prepare $|\mu_T\rangle$ rather than merely measuring properties, which is also the setting in other quantum algorithms for ODEs, we can instead use the amplitude amplification algorithm \cite{Brassard_2002} to extract $|\mu_T\rangle$ from $|S_T\rangle$. The observation is that we have the relation: $(\langle 0|\otimes I_n\otimes \langle E_{T1}|_e)|S_T\rangle=\eta_T/\sqrt{2}|\mu_T\rangle$ where $|E_{T1}\rangle_e$ is a special environment state, which tells how to design the Grover operator \cite{grover1996fast} for the amplitude amplification. As a result, to prepare $|\mu_T\rangle$, we require $\mathcal{O}(\eta_T^{-1})$ queries on the Lindbladian simulation oracle and the initial state ($|\mu_0\rangle$) preparation oracle which matches the lower bound \cite{an2023theoryquantumdifferentialequation}. See SM \cite{supp} \ref{sec: extract} for detailed constructions.

\noindent\textit{\textbf{Complexity analysis.—}}
\begin{table*}[t]
\renewcommand{\arraystretch}{2}
\centering
\begin{tabular}{c|c|c|c}\hline\hline
\textbf{Algorithms} & \makecell{\textbf{Direct access model}\\\textbf{Queries to $V(t)$}} & \makecell{\textbf{Square root access model}\\\textbf{Queries to $H_1(t)$ and $\{G_i(t)\}$}}& \textbf{Queries to $|\mu_0\rangle$} \\\hline 
Spectral method \cite{childs2020quantum} & $\widetilde{\mathcal{O}}\left( \eta_T^{-1}\kappa_V \alpha_VT \text{~poly}\left(\log\left(\frac{1}{\epsilon}\right)\right)\right)$  & no change&$\widetilde{\mathcal{O}}\left(\eta_T^{-1} \kappa_V  \alpha_VT \text{~poly}\left(\log\left(\frac{1}{\epsilon}\right)\right)\right)$ \\\hline
Truncated Dyson \cite{berry2024quantum} & $\widetilde{\mathcal{O}}\left( \eta_T^{-1} \alpha_VT \left(\log\left(\frac{1}{\epsilon}\right)\right)^2\right)$ &no change & $\mathcal{O}\left(\eta_T^{-1}  \alpha_VT \log\left(\frac{1}{\epsilon}\right) \right)$ \\\hline
\makecell{QEVT \cite{low2024quantum}(only\\ time-independent)} & $\widetilde{\mathcal{O}}\left(\eta_T^{-1} \left(\alpha_VT+ \log\left(\frac{1}{\epsilon}\right)\right)\log\left(\frac{1}{\epsilon}\right)\right) $  \cite{commentqevt}& no change&$\widetilde{\mathcal{O}}\left(\eta_T^{-1} \left( \alpha_VT+ \log\left(\frac{1}{\epsilon}\right)\right)\log\left(\frac{1}{\epsilon}\right)\right) $  \\\hline
Time-marching \cite{fang2023time} & $\widetilde{\mathcal{O}}\left( \eta_T^{-1} \alpha_V^2T^2 \log\left(\frac{1}{\epsilon}\right)\right)$ & no change& $\mathcal{O}\left(\eta_T^{-1} \right)$  \\\hline
LCHS \cite{an2023linear} & $\widetilde{\mathcal{O}}\left( \eta_T^{-2}  \alpha_VT \epsilon^{-1} \right)$  & no change & $\mathcal{O}\left(\eta_T^{-1} \right)$  \\\hline
\makecell{Improved LCHS \cite{an2023quantum}\\(time-independent)} & $\widetilde{\mathcal{O}}\left( \eta_T^{-1}  \alpha_VT \left(\log\left(\frac{1}{\epsilon}\right)\right)^{1+o(1)} \right)$  &no change  & $\mathcal{O}\left(\eta_T^{-1} \right)$  \\\hline
\makecell{Improved LCHS \cite{an2023quantum}\\(time-dependent)} & $\widetilde{\mathcal{O}}\left( \eta_T^{-1}  \alpha_VT \left(\log\left(\frac{1}{\epsilon}\right)\right)^{2+o(1)} \right)$ & no change & $\mathcal{O}\left(\eta_T^{-1} \right)$  \\\hline
\makecell{This work\\ (time-independent)}& $\widetilde{\mathcal{O}}\left(\eta_T^{-1}\Delta^{-1} \alpha_V^2 T\frac{\log^2(1/\epsilon)}{\log\log( 1/\epsilon)}\right)$ &$\widetilde{\mathcal{O}}\left(\eta_T^{-1}\alpha_VT\frac{\log(1/\epsilon)}{\log\log( 1/\epsilon)}\right)$ & $\mathcal{O}\left( \eta_T^{-1} \right)$  \\\hline
\makecell{This work \\(time-dependent)}& $\widetilde{\mathcal{O}}\left(\eta_T^{-1}\Delta^{-1}\alpha_V^2T\frac{\log^3(1/\epsilon)}{\log^2\log( 1/\epsilon)}\right)$&$\widetilde{\mathcal{O}}\left(\eta_T^{-1}\alpha_V T\frac{\log^2(1/\epsilon)}{\log^2\log( 1/\epsilon)}\right)$  & $\mathcal{O}\left( \eta_T^{-1} \right)$  \\\hline\hline
\end{tabular}
\caption{Comparison among various ODE solvers for homogeneous cases under two different input models,
direct access model (Theorem \ref{the1}) and square root access model (Theorem \ref{the2}). The symbol $\epsilon$ is defined as the preparation error of the normalized state $|\mu_T\rangle$. We define $\eta_T=\|\vec{\mu}(T)\|$ with $\vec{\mu}(0)=|\mu_0\rangle$. $\Delta$ is defined as the lower bound of the smallest non-zero eigenvalues of the Hermitian part of $V(t)$. $\kappa_V$ is the upper bound of condition numbers of $V(t)$.\label{tb1}}
\end{table*}
In this section, we discuss the overall complexity of our algorithm. To facilitate the comparison with previous algorithms, we focus on the preparation of $|\mu_T\rangle$ as the goal. Given an ODE in Eq. \ref{odedef} with $A(t)=\frac{V(t)-V^\dagger(t)}{2i}=H_1(t)$ and $B(t)=\frac{V(t)+V^\dagger(t)}{2}=\frac{1}{2}\sum_i G_i(t)^\dag G_i(t)$, the complexity now depends on two parts: input models and the Lindbladian simulation algorithms. For the Lindbladian simulation, we adopt the results in Ref. \cite{li2023opensystem} for the time-independent case, and the results in Ref. \cite{he2024efficientoptimalcontrolopen} for the time-dependent case. To enable the Lindbladian simulation, we need access (block encodings) of $H_1(t)$ and $\{G_i(t)\}$, which naturally leads to two different input models. 

The first model is that we are given the direct access (block encoding) of $V(t)$. In this case, we need to construct the block encodings of $H_1(t)$ and $\{G_i(t)\}$ from $V(t)$. For $\{G_i(t)\}$, we consider setting a single jump operator: $G(t)=\sqrt{2B(t)}$ constructed by implementing a modified square root function approximated by QSVT \cite{gilyen2019quantum,gilyén2022improvedquantumalgorithmsfidelity} using only odd polynomials (see SM \cite{supp} \ref{sec: algcom}), which introduces some overheads in terms of $\alpha_V$ and $\epsilon$ and also gives an additional complexity dependence on $\Delta$, the lower bound of the smallest non-zero eigenvalues of $B(t)$ \cite{commentdelta}. The second input model is the square root access model where we are given the block encoding of $H_1(t)$ and $\{G_i(t)\}$ in the form $\alpha_{H_1}^{-1}|0\rangle\langle 0|\otimes H_1(t)+\sum_i\alpha_{G_i}^{-1}|i\rangle\langle i|\otimes G_i(t)$. This is the most natural model for Lindbladian simulations and directly applies to several practical scenarios where the operator can be expressed in a sum-of-squares form. For example, consider the case where $B=\sum_{ij}h_{ij}c^\dag_i c_j$ is a free fermionic Hamiltonian with the matrix $h=\sum_{ij}h_{ij}|i\rangle\langle j|$.  In such cases, it is always possible to find a basis transformation $T$ such that $h=T^\dag\Sigma T$ with a diagonal $\Sigma$. Expressing $B$ in terms of the transformed creation and annihilation operators $c'^\dag_i$ and $c'_i$ with $c'_i=\sum_{j}T_{ij}c_j$ satisfies the square root access. For more general cases, we recommend readers to Ref. \cite{ashida2020nonhermitian,low2025fast,king2025quantumsimulationsumofsquaresspectral}.

Depending on the input models, we have the following two theorems: 
\begin{theorem}[Quantum ODE solver via Lindbladian, direct access model\label{the1}]
For semi-dissipative and homogeneous linear ODEs, assume we are given access to $\alpha_V$-block encoding $U_{V(t)}$ of $V(t)$ with $\alpha_V\geq \max_t \|V(t)\|$ and quantum state preparation unitary $U_{\mu_0}$: $\ket{0}_n\rightarrow |\mu_0\rangle$, and suppose that the smallest \textbf{nonzero} eigenvalues of $B(t)$ is lower bounded by $\Delta>0$. Then, there exists a quantum algorithm that outputs an $\epsilon$-approximation of $|\mu_T\rangle$ by using $\mathcal{O}(\eta_T^{-1})$ queries to $U_{\mu_0}$ (and its reverse) and $\widetilde{\mathcal{O}}\left(\eta_T^{-1}\Delta^{-1} \alpha_V^2 T\frac{\log^2(1/\epsilon)}{\log\log(1/\epsilon)}\right)$ queries to $U_{V(t)}$ (and its inverse) in the time-independent case, or $\widetilde{\mathcal{O}}\left(\eta_T^{-1}\Delta^{-1} \alpha_V^2 T\frac{\log^3(1/\epsilon)}{\log^2\log(1/\epsilon)}\right)$ queries to $U_{V(t)}$ (and its inverse) in the time-dependent case. 
\end{theorem}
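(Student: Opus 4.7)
The plan is to prove Theorem~\ref{the1} by assembling three ingredients already identified in the algorithm overview: (i) the NDME-based Lindbladian construction whose upper-right block produces $\eta_T|\mu_T\rangle\langle\mu_0|/2$ at time $T$, (ii) block-encoding arithmetic combined with QSVT to convert the single input oracle $U_{V(t)}$ into the oracles for $H_1(t)$ and $G(t)=\sqrt{2B(t)}$ that the Lindbladian simulator actually consumes, and (iii) amplitude amplification on the purification $|S_T\rangle$ to extract $|\mu_T\rangle$ with $\Theta(1)$ success probability. With these pieces in place the proof reduces to a three-layer query-cost accounting together with an error-budget argument.

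For the first technical step, splitting $V(t)=iH_1(t)+B(t)$ with $H_1(t)=(V(t)-V(t)^\dagger)/(2i)$ and $B(t)=(V(t)+V(t)^\dagger)/2$, a one-ancilla LCU on $U_{V(t)}$ and $U_{V(t)}^\dagger$ yields $\mathcal{O}(\alpha_V)$-block encodings of both $H_1(t)$ and $B(t)$ at $\mathcal{O}(1)$ queries to $U_{V(t)}$ each. The more delicate construction is $G(t)=\sqrt{2B(t)}$: since $B(t)$ is only semidefinite, I cannot approximate $\sqrt{x}$ uniformly near $0$ and must instead use the modified odd square-root polynomial described in the supplement, which $\epsilon'$-approximates $\sqrt{x}$ on $[\Delta/\alpha_V,1]$, vanishes at $x=0$, and contains only odd Chebyshev modes so that QSVT lifts it to a genuine matrix function rather than a singular-value transformation. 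The resulting polynomial has degree $\widetilde{\mathcal{O}}(\alpha_V\Delta^{-1}\log(1/\epsilon'))$, giving a block encoding of $G(t)$ at that number of queries to $U_{V(t)}$.

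Feeding the block encodings of $H_1(t)$ and $G(t)$ into the time-independent Lindbladian simulator of Ref.~\cite{li2023opensystem} (or the time-dependent one of Ref.~\cite{he2024efficientoptimalcontrolopen}) returns a purification $|\widetilde S_T\rangle$ within trace distance $\epsilon''$ of the exact evolution $\rho_T$ of Eq.~\eqref{rhott}, using $\widetilde{\mathcal{O}}(\alpha_V T\log(1/\epsilon'')/\log\log(1/\epsilon''))$ block-encoding queries in the time-independent case and an extra $\log/\log\log$ factor in the time-dependent case. The overview identity $(\langle 0|\otimes I_n\otimes\langle E_{T1}|_e)|S_T\rangle=(\eta_T/\sqrt{2})|\mu_T\rangle$ then shows that amplitude amplification built from $U_{\mu_0}$ and the unitary preparing $|\widetilde S_T\rangle$ extracts $|\mu_T\rangle$ in $\mathcal{O}(\eta_T^{-1})$ rounds, each invoking $|\widetilde S_T\rangle$ once and $U_{\mu_0}$ a constant number of times. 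Multiplying the three layers reproduces the stated $\widetilde{\mathcal{O}}(\eta_T^{-1}\Delta^{-1}\alpha_V^2 T\log^2(1/\epsilon)/\log\log(1/\epsilon))$ queries to $U_{V(t)}$ in the time-independent case and its time-dependent analogue, while $U_{\mu_0}$ is invoked only $\mathcal{O}(\eta_T^{-1})$ times overall. The step requiring the most care is the error accounting: the QSVT error on $G(t)$ perturbs the generator of the Lindbladian, the simulator itself adds a trace-distance error, and amplitude amplification turns both into a fidelity error on $|\mu_T\rangle$ magnified by $\eta_T^{-1}$; however, because every subroutine scales as $\mathrm{polylog}(1/\epsilon')$ or $\mathrm{polylog}(1/\epsilon'')$ in its own accuracy, a standard stability lemma for the Lindbladian simulator under block-encoding perturbations lets one set each sub-budget to an inverse polynomial in $\epsilon\eta_T$ without affecting the asymptotic complexity.
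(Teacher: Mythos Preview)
Your proposal is correct and follows essentially the same route as the paper's proof: LCU to obtain block encodings of $H_1$ and $B$, an odd-polynomial QSVT of degree $\widetilde{\mathcal{O}}(\alpha_V\Delta^{-1}\log(1/\epsilon'))$ for the square root, the Lindbladian simulators of Refs.~\cite{li2023opensystem,he2024efficientoptimalcontrolopen}, and amplitude amplification via the $|E_{T1}\rangle$ identity. The only place the paper is slightly more explicit is the error propagation, where it uses the built-in $\epsilon'\le\epsilon/(\alpha_{\mathcal{L}}T)$ tolerance of the Lindbladian simulator together with a short trace-norm-to-vector-norm inequality, rather than invoking a separate stability lemma.
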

\begin{theorem}[Quantum ODE solver via Lindbladian, square root access model\label{the2}]
For semi-dissipative and homogeneous linear ODEs, assume we are given access to $1$-block encoding $U_{H_1(t),G_i(t)}$ of $\alpha_{V}^{-1}|0\rangle\langle 0|\otimes H_1(t)+\sum_i\alpha_{V}^{-1/2}|i\rangle\langle i|\otimes G_i(t)$ with $H_1(t)=A(t)$, $\sum_i G_i(t)^\dag G_i(t)=B(t)$, and $\alpha_V\geq \max_t \|V(t)\|$, and access to the quantum state preparation unitary $U_{\mu_0}$: $\ket{0}_n\rightarrow |\mu_0\rangle$. Then, there exists a quantum algorithm that outputs an $\epsilon$-approximation of $|\mu_T\rangle$ by using $\mathcal{O}(\eta_T^{-1})$ queries to $U_{\mu_0}$ (and its reverse) and $\widetilde{\mathcal{O}}\left(\eta_T^{-1}\alpha_V T\frac{\log(1/\epsilon)}{\log\log(1/\epsilon)}\right)$ queries to $U_{H_1(t),G_i(t)}$ (and its inverse) in the time-independent case, or $\widetilde{\mathcal{O}}\left(\eta_T^{-1}\alpha_V T\frac{\log^2(1/\epsilon)}{\log^2\log(1/\epsilon)}\right)$ queries to $U_{H_1(t),G_i(t)}$ (and its inverse) in the time-dependent case.
\end{theorem}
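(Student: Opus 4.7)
\textbf{Proof proposal for Theorem \ref{the2}.} The plan is to combine the NDME-to-Lindbladian reduction established in the algorithm overview with (i) a direct pass-through of the square-root block encoding into a Lindbladian simulation subroutine, and (ii) amplitude amplification on the purification to distill $|\mu_T\rangle$. The four steps are as follows.

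First, I build a block encoding for the extended Lindbladian. The Lindbladian needed for the NDME scheme has internal Hamiltonian $\tilde H = H_1(t)\oplus 0$ and jump operators $\tilde F_i = G_i(t)\oplus 0$ on the $(1+n)$-qubit system. Given the square-root block encoding $U_{H_1(t),G_i(t)}$, I conjugate its action by a controlled-identity on the NDME ancilla qubit: concretely, I apply $U_{H_1(t),G_i(t)}$ only when the NDME qubit is in $|0\rangle$ and identity otherwise. This yields a $1$-block encoding of $\alpha_V^{-1}|0\rangle\langle 0|\otimes \tilde H + \sum_i \alpha_V^{-1/2}|i\rangle\langle i|\otimes \tilde F_i$ using one query to $U_{H_1(t),G_i(t)}$. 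This is exactly the input format required by the state-of-the-art Lindbladian simulation algorithms of Refs.~\cite{li2023opensystem,he2024efficientoptimalcontrolopen}, where the $|0\rangle$ sector encodes the Hamiltonian with effective norm $\alpha_V$ and the $|i\rangle$ sectors encode jump operators whose squared norms sum to $\alpha_V$ (since $\sum_i G_i^\dagger G_i = B(t)$ and $\|B(t)\|\le \alpha_V$).

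Second, I invoke the Lindbladian simulation subroutine to prepare a purification $|S_T\rangle$ of $\rho_T$ satisfying $\|\rho_T - \rho_T^\mathrm{ideal}\|_1 \le \delta$ for a parameter $\delta$ to be fixed below. Applying Theorem 1 of \cite{li2023opensystem} in the time-independent case gives cost $\widetilde{\mathcal{O}}(\alpha_V T \log(1/\delta)/\log\log(1/\delta))$ queries to the extended block encoding; for the time-dependent case the corresponding result in \cite{he2024efficientoptimalcontrolopen} yields an extra logarithmic factor, producing $\widetilde{\mathcal{O}}(\alpha_V T \log^2(1/\delta)/\log^2\log(1/\delta))$. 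The initial state $\rho_0 = |+\rangle\langle+|\otimes |\mu_0\rangle\langle \mu_0|$ is prepared with $\mathcal{O}(1)$ queries to $U_{\mu_0}$.

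Third, I extract $|\mu_T\rangle$ by amplitude amplification on $|S_T\rangle$. Using the identity $(\langle 0|\otimes I_n\otimes \langle E_{T1}|_e)|S_T\rangle = (\eta_T/\sqrt{2})|\mu_T\rangle$ recalled in the algorithm overview (and detailed in SM~\ref{sec: extract}), the success amplitude of the natural projector is $\Theta(\eta_T)$. Standard fixed-point amplitude amplification~\cite{Brassard_2002} then outputs $|\mu_T\rangle$ with $\mathcal{O}(\eta_T^{-1})$ repetitions, each repetition invoking one preparation of $|S_T\rangle$ and one $U_{\mu_0}$ call. Multiplying gives the claimed query counts: $\mathcal{O}(\eta_T^{-1})$ queries to $U_{\mu_0}$, and $\widetilde{\mathcal{O}}(\eta_T^{-1}\alpha_V T \log(1/\epsilon)/\log\log(1/\epsilon))$ (resp.\ $\widetilde{\mathcal{O}}(\eta_T^{-1}\alpha_V T \log^2(1/\epsilon)/\log^2\log(1/\epsilon))$) queries to $U_{H_1(t),G_i(t)}$.

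Finally, I close the error analysis by choosing $\delta$. Since amplitude amplification amplifies trace-distance errors in $|S_T\rangle$ by at most a factor of $\mathcal{O}(\eta_T^{-1})$, setting $\delta = \Theta(\eta_T \epsilon)$ ensures the final state is within $\epsilon$ of $|\mu_T\rangle$; the resulting $\log(1/\delta) = \Theta(\log(1/\epsilon) + \log(1/\eta_T))$ is absorbed in the $\widetilde{\mathcal{O}}$. The main technical obstacle I expect is this error propagation step: the Lindbladian simulation error lives in trace distance on a subnormalized block, and I must argue carefully that projecting onto the amplified subspace and normalizing yields a pure-state error bounded by a constant multiple of $\delta/\eta_T$, without any hidden $\mathrm{poly}(\eta_T^{-1})$ blow-up that would spoil the optimal $\eta_T^{-1}$ dependence. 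The remaining considerations—implementing controlled versions of $U_{H_1(t),G_i(t)}$ and discretizing the time-ordered evolution in the time-dependent case—are standard and contribute only within the $\widetilde{\mathcal{O}}$.
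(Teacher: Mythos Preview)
Your proposal is correct and follows essentially the same route as the paper: controlled application of the square-root block encoding to build $\tilde H$ and $\tilde F_i$ on the $(1+n)$-qubit system, direct invocation of the Lindbladian simulation subroutines of Refs.~\cite{li2023opensystem,he2024efficientoptimalcontrolopen}, and amplitude amplification via the construction in SM~\ref{sec: extract}, with the error budget $\delta=\Theta(\eta_T\epsilon)$. The technical obstacle you flag---converting the trace-norm error on $\rho_T$ into a vector-norm error on $|\mu_T\rangle$ without extra $\eta_T^{-1}$ factors---is resolved in the paper by the elementary inequality $\|\vec\psi-\vec\phi\|_2\le\|(\vec\psi-\vec\phi)\vec\mu^\dagger\|_1/\|\vec\mu\|_2$ (Lemma~\ref{lemma.norm.error}), applied to the off-diagonal block $\eta_T|\mu_T\rangle\langle\mu_0|$; this gives exactly the $\delta/\eta_T$ propagation you anticipate with no hidden blow-up.
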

\noindent In the descriptions of Theorem \ref{the1} and \ref{the2}, the block encoding model for time-dependent cases follows the HAM-T structure in Ref. \cite{low2018hamiltonian} with an additional time register. (See more details in SM \cite{supp} \ref{sec: algcom} where we also explicitly provide additional algorithmic costs, including gate counts and the number of required ancilla qubits.)

We summarize the complexity of our algorithm with comparisons with previous ones in Table \ref{tb1}. The lower bounds of solving ODEs with respect to $T$ is $\mathcal{O}(T)$ from the no fast-forwarding theorem \cite{berry2007efficient}, with respect to $\eta_T$ is $\mathcal{O}(\eta_T^{-1})$ originated from the state discrimination \cite{an2023theoryquantumdifferentialequation}, and with respect to $\epsilon$ is $\mathcal{O}(\frac{\log(1/\epsilon)}{\log\log(1/\epsilon)})$ from the parity checking \cite{berry2014exponential}. Under the direct access model, our algorithm gives near optimal dependence on $T$, $\eta_T$, and $\epsilon$. Under the square root access model, for previous methods, we need to reconstruct the block encoding of $V(t)$ from $U_{H_1(t),G_i(t)}$. Our algorithm outperforms all previous methods under this model. Compared with methods requiring quantum linear system solvers \cite{childs2020quantum,berry2024quantum,low2024quantum}, our algorithm achieves optimal queries to the initial state preparation. Compared with the time-marching method \cite{fang2023time}, our algorithm improves the dependence on $\alpha_VT$ from quadratic to linear. Compared with the state-of-the-art improved LCHS method \cite{an2023quantum}, our algorithm also gives better dependence on the preparation error $\epsilon$. Especially for the time-independent case, our algorithm achieves optimal dependence on $\epsilon$. We want to mention that while the two input models have different access, the complexity lower bound dependence on parameters is actually the same. (We give an explanation on this in SM \cite{supp} \ref{sec:lowb}.)

\noindent\textit{\textbf{Additional discussions.—\label{sec: addre}}}
\textit{Inhomogeneous case: }
Here, we give a high-level discussion on how to generalize our algorithm to inhomogeneous ODEs, which shares the similar strategy to Ref. \cite{an2023quantum}. By Duhamel’s principle \cite{hartman2002ordinary}, the solution of a general linear ODE in Eq. \ref{odedef} can be represented as 
\begin{equation}
\vec{\mu}(T)=\mathcal{T}e^{-\int_0^T V(t)dt}\vec{\mu}(0)+\int_0^T \mathcal{T}e^{-\int_t^T V(t')dt'}\vec{b}(t)dt,
\end{equation}
where $\mathcal{T}$ denotes the time-ordering operator. $\mathcal{T}e^{-\int_0^T V(t)dt}\vec{\mu}(0)$ is the solution for the homogeneous case and encoded into $\rho_{T}$. Following the same procedure as preparing $\rho_T$, each integrand $e^{-\int_t^T V(t')dt'}\vec{b}(t)$ at different time $t$ can be encoded into a density matrix $\rho_{b,t}$. We let $|S_{b,t}\rangle$ to be the purification of $\rho_{b,t}$. By discretizing the second term with $m$ slices, $\vec{\mu}(T)$ can be encoded into a density matrix $\rho_{sol}$ that roughly looks like $\rho_{sol}\propto \rho_T+T/m\sum_{t}\|\vec{b}(t)\|_2\rho_{b,t} $. One purification of $\rho_{sol}$ has the form: $|S_{sol}\rangle\propto |0\rangle|S_T\rangle+\sqrt{T/m}\sum_{t}\|\vec{b}(t)\|_2^{1/2}|t\rangle|S_{b,t}\rangle $, which can then be prepared by Lindbladian simulations with the time register controlling the evolution time on an initial state $|S_{0,sol}\rangle\propto |0\rangle|\mu_0\rangle+\sqrt{T/m}\sum_{t}\|\vec{b}(t)\|^{1/2}|t\rangle|b_t\rangle$.

\textit{Applications: } 
Differential equations can have vast applications in both academic and real life \cite{braun1983differential,hartman2002ordinary}. One particular topic we want to emphasize is that our algorithm also could have important applications in non-Hermitian physics~\cite{Bender2007,RegoMonteiroNobre2013,GiusteriMattiottiCelardo2015,yao2018edge,el2018non,GongAshidaKawabataEtAl2018,kawabata2019symmetry,okuma2020topological,AshidaGongUeda2020,MatsumotoKawabataAshidaEtAl2020,bergholtz2021exceptional,DingFangMa2022,ChenSongLado2023,ZhengQiaoWangEtAl2024,ShenLuLadoEtAl2024}. Since the dynamics governed by non-Hermitian Hamiltonians can be directly understood as general linear ODEs, our algorithm may set the playground of using quantum computers to test various non-Hermitian effects. While there exists the effective non-Hermitian Hamiltonian formalism \cite{roccati2022non}, it only approximates the Lindbladians well for short-time dynamics. In comparison, what we achieved here is to exactly embed effective non-Hermitian Hamiltonian dynamics through NDME and enable long-time behavior testing. (See SM \cite{supp} \ref{sec:rela} for detailed discussions.)

Our algorithm can also be directly applied to Gibbs state related tasks. Given an $n$-qubit positive and semi-definite Hamiltonian $B$, its corresponding Gibbs state at the temperature $1/\beta$ is $\rho_\beta=e^{-\beta B}/Z_\beta$ with $Z_\beta=\text{Tr}(e^{-\beta B})$ denoted as the partition function. Using our algorithm, the symmetric purification state $|\rho_\beta\rangle$ of $\rho_\beta$ can be prepared with the Grover-type speedup by noticing $|\rho_\beta\rangle=\sqrt{\frac{2^n}{Z_\beta}}\left(e^{-\beta/2 B}\otimes I_n\right)|\Omega\rangle$ with $|\Omega\rangle=2^{-n/2}(\sum_i |i\rangle|i\rangle)$ and $e^{-\beta/2 B}$, as the imaginary time evolution, is a special case of linear ODE. Also, since the value of the partition function $Z_\beta$ is directly encoded in $\rho_T$ (and $|S_T\rangle$, see SM \cite{supp} \ref{sec: extract}), with the aid of amplitude estimation algorithms \cite{Brassard_2002,aaronson2020quantum,grinko2021iterative}, our construction also gives way for efficient partition function estimations. Note that the ability for partition function estimation is not a general feature in other quantum ODE solvers, and only holds for some of the evolution-based methods~\cite{jin2022quantum,an2023linear,an2023quantum,low2024quantum}.

\noindent\textit{\textbf{Summary and outlook.—}}
In summary, we propose a new quantum algorithm for linear ODEs. By combining the natural non-unitary dynamics of Lindbladians and a new technique called the non-diagonal density matrix encoding, the solution to a differential equation can be encoded into a non-diagonal block of a density matrix. With the aid of advanced Lindbladian simulation algorithms and the assumption of a plausible input model, we achieve near-optimal dependence on all parameters and outperform all existing methods. We also talk about its potential influence on non-Hermitian physics and Gibbs-related applications.

For future directions, since the complexity heavily depends on the results of Lindbladian simulations, our results give motivations to further improve the results for Lindbladian simulations, especially for the time-dependent case. In this sense, it might be easier to improve our algorithm to a lower complexity compared with the improved LCHS method \cite{an2023quantum}. Beyond solving differential equations, Since our algorithm achieves eigenvalue transformations beyond the framework of QSVT \cite{gilyen2019quantum}, it will also be interesting to explore whether the idea in this work can be used for other eigenvalue-related problems such as solving general systems of linear equations. We hope this work can invoke future studies on Lindbladian-based quantum algorithms and other quantum applications based on non-diagonal density matrix encoding.





\noindent\textit{\textbf{Achknowledgements.—}}
The authors would like to thank Zongping Gong, Chao-Yang Lu, Andris Ambainis, Wenjun Yu, Tianfeng Feng, Yuchen Guo, Shi Jin, Patrick Rebentrost, Chunhao Wang, and Xiantao Li for fruitful discussions. Z.S. and Q.Z. acknowledge the support from the HKU Seed Fund for Basic
Research for New Staff via Project 2201100596, Guangdong Natural Science Fund via Project 2023A1515012185, National Natural Science Foundation of China (NSFC) via Project No. 12305030 and No. 12347104, Hong Kong Research Grant Council (RGC) via No. 27300823, N\_HKU718/23, and R6010-23, Guangdong Provincial Quantum Science Strategic Initiative GDZX2200001. 
\bibliography{ref}
\clearpage
\begin{appendix}
\onecolumngrid
\renewcommand{\addcontentsline}{\oldacl}
\renewcommand{\tocname}{Contents}
\textbf{Design nearly optimal quantum algorithm for linear differential equations via Lindbladians}

Zhong-Xia Shang, Naixu Guo, Dong An, and Qi Zhao


\tableofcontents





















\section{Preliminaries}
\subsection{Symbol conventions}
In this work, we will use $\|\cdot\|_2$ to denote the vector 2-norm for vectors, $\|\cdot\|$ to denote the spectral norm for matrices, and $\|\cdot\|_1$ to denote the trace norm for matrices.

In this work, we will use the symbol $n$ to denote the $n$-qubit system of ODE and the symbol $e$ to denote the environment system. We may sometimes omit these symbols when there is no ambiguity.



\subsection{Quantum linear algebra \label{sec:qla}}

In this subsection, we summarize some important results of quantum linear algebra.

\begin{definition}[Block encoding~\cite{chakraborty2019power, gilyen2019quantum}\label{def.blockencoding}]
We say a unitary $U_M$ is an $(\alpha,a,\epsilon)$-block encoding of matrix $M\in \mathbb{C}^{2^n\times 2^n}$ if
\begin{align}
    \|M-\alpha (\bra{0^a}\otimes I_n)U_M(\ket{0^a}\otimes I_n)\|\leq \epsilon.
\end{align}
\end{definition}
\noindent We may sometimes only use $\alpha$-block encoding as a shorter notation for $(\alpha,a,0)$.

\begin{lemma}[Multiplication of block encodings \cite{chakraborty2019power, gilyen2019quantum}]\label{product.blockencoding}
    If $U_{M_1}$ is an $(\alpha,a,\Delta)$-block encoding of an $s$-qubit operator $M_1$, and $U_{M_2}$ is a $(\beta,b,\epsilon)$-block encoding of an $s$-qubit operator $M_2$, then $(I_b\otimes U_{M_1})(I_a\otimes U_{M_2})$ is an $(\alpha\beta,a+b,\alpha\epsilon+\beta\Delta)$-block encoding of $M_1M_2$.
\end{lemma}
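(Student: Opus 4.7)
The plan is to unpack the block-encoding definitions, reduce the claim to an algebraic identity on the accessed block, and close out with a single triangle-inequality estimate. First I would introduce the abbreviations $\tilde M_1 := \alpha(\langle 0^a|\otimes I_n)\,U_{M_1}\,(|0^a\rangle\otimes I_n)$ and $\tilde M_2 := \beta(\langle 0^b|\otimes I_n)\,U_{M_2}\,(|0^b\rangle\otimes I_n)$, so that the hypotheses of the lemma read as $\|M_1-\tilde M_1\|\le\Delta$, $\|M_2-\tilde M_2\|\le\epsilon$, and unitarity of $U_{M_1},U_{M_2}$ gives the automatic norm bounds $\|\tilde M_1\|\le\alpha$ and $\|\tilde M_2\|\le\beta$. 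Also, as is standard for block encodings, one takes $\|M_1\|\le\alpha$ and $\|M_2\|\le\beta$ as part of the setup.

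Second, I would compute the $(a+b)$-ancilla block of $W:=(I_b\otimes U_{M_1})(I_a\otimes U_{M_2})$. The point is that the two operators commute with the ancilla projectors they do not touch: $I_a\otimes U_{M_2}$ acts trivially on the $a$-register, so $\langle 0^a|$ and $|0^a\rangle$ slide past it, and symmetrically for $I_b\otimes U_{M_1}$. Consequently
\begin{equation}
(\langle 0^{a+b}|\otimes I_n)\,W\,(|0^{a+b}\rangle\otimes I_n)
= (\langle 0^a|\otimes I_n)U_{M_1}(|0^a\rangle\otimes I_n)\cdot(\langle 0^b|\otimes I_n)U_{M_2}(|0^b\rangle\otimes I_n)
= \frac{\tilde M_1\tilde M_2}{\alpha\beta}.
\end{equation}
Hence $W$ is exactly an $(\alpha\beta,a+b,0)$-block encoding of $\tilde M_1\tilde M_2$, and it remains only to compare $\tilde M_1\tilde M_2$ to the target $M_1M_2$.

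Third, for the error bound I would use the standard telescoping
\begin{equation}
M_1M_2 - \tilde M_1\tilde M_2 = (M_1-\tilde M_1)\,M_2 + \tilde M_1\,(M_2-\tilde M_2),
\end{equation}
and apply submultiplicativity of the spectral norm together with $\|M_2\|\le\beta$ and $\|\tilde M_1\|\le\alpha$ to obtain $\|M_1M_2-\tilde M_1\tilde M_2\|\le \beta\Delta+\alpha\epsilon$. Combined with the previous step this is exactly the $(\alpha\beta,a+b,\alpha\epsilon+\beta\Delta)$-block-encoding property required by Definition~\ref{def.blockencoding}.

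The main obstacle is purely bookkeeping: making the commutation of $I_a\otimes U_{M_2}$ with the $\langle 0^a|,|0^a\rangle$ projectors (and the symmetric statement) explicit and rigorous despite the slight abuse in the tensor-factor ordering of the notation $(I_b\otimes U_{M_1})(I_a\otimes U_{M_2})$. Once that is clearly laid out, the algebraic identity and the triangle-inequality estimate are both one-line, so no further ideas are needed.
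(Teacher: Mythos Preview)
Your proposal is correct and is the standard argument for this result; the paper itself does not supply a proof of this lemma, instead citing it directly from \cite{chakraborty2019power,gilyen2019quantum}. One small remark: the bound $\|M_2\|\le\beta$ that you invoke is not literally part of Definition~\ref{def.blockencoding} as stated in the paper, but as you note it is the standard convention (and in any case the alternative telescoping $M_1(M_2-\tilde M_2)+(M_1-\tilde M_1)\tilde M_2$ would only need $\|M_1\|\le\alpha$ together with the automatic $\|\tilde M_2\|\le\beta$).
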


\begin{lemma}[Polynomial of a block encoding \cite{gilyen2019quantum}\label{theorem.qsvt}]
    Let $\xi>0$.
    Given $U_A$ that is an $(\alpha,a,\epsilon)$-block encoding of a Hermitian matrix $A$, and a real $\ell$-degree function $f(x)$ with $|f(x)|\leq \frac{1}{2}$ for $x\in [-1,1]$, one can prepare a $(1,a+n+4,4\ell\sqrt{\epsilon/\alpha}+\xi)$-block encoding of $f(A/\alpha)$ by using $\mathcal{O}(\ell)$ queries to $U_A$ and $\mathcal{O}(\ell(a+1))$ one- and two-qubit quantum gates.
    The description of the quantum circuit can be computed classically in time $\mathcal{O}(\mathrm{poly}(\ell,\log(1/\xi)))$.
\end{lemma}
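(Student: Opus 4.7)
The plan is to prove this as a direct consequence of the Quantum Singular Value Transformation (QSVT) framework of Gilyén–Su–Low–Wiebe. The starting point is Quantum Signal Processing (QSP) on a single qubit: I would recall that, given a phase sequence $(\phi_0,\ldots,\phi_\ell)\in\mathbb{R}^{\ell+1}$ and the signal rotation $W(x)=\begin{pmatrix} x & i\sqrt{1-x^2}\\ i\sqrt{1-x^2} & x\end{pmatrix}$, the product $e^{i\phi_0 Z}\prod_{k=1}^{\ell} W(x) e^{i\phi_k Z}$ realizes a matrix whose $(0,0)$-entry is a polynomial $P(x)$ of degree $\ell$ with a fixed parity, and every such $P$ with $|P(x)|\le 1$ on $[-1,1]$ is achievable. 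A constructive phase-finding algorithm (via Haah's or Chao–Ding–Gilyén–Huang–Szegedy's method) then produces the $\phi_k$ classically in time $\mathcal{O}(\mathrm{poly}(\ell,\log(1/\xi)))$, which accounts for the classical preprocessing claim.

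Next, I would lift QSP to QSVT. Using the block encoding $U_A$ and the projectors $\Pi=|0^a\rangle\langle 0^a|\otimes I$, the alternating circuit $\prod_k \big(e^{i\phi_k(2\Pi-I)} U_A^{(\dagger)}\big)$ implements, on the encoded block, a polynomial transformation of the singular values of $A/\alpha$. Because $A$ is Hermitian, its singular values are the absolute values of its eigenvalues, so QSVT implements $P(A/\alpha)$ for any definite-parity polynomial $P$ with $\|P\|_{[-1,1]}\le 1$, using exactly $\ell$ controlled applications of $U_A$ (or $U_A^\dagger$) and $\mathcal{O}(\ell(a+1))$ additional single- and two-qubit gates for the phased projector reflections. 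Since a general real polynomial $f$ of degree $\ell$ with $|f|\le 1/2$ does not have definite parity, I would split $f=f_e+f_o$ into its even and odd parts (each bounded by $1/2$), implement each via QSVT, and combine them by a one-ancilla linear combination of unitaries (LCU) with a Hadamard trick on that control qubit; the $1/2$ bound on $f$ ensures the resulting sum lies in the unit block. This explains the extra constant in the total ancilla count $a+\mathcal{O}(1)$ (the excess $+n$ in the stated $a+n+4$ appears to reflect the authors' accounting for workspace reserved for the target system throughout their protocol, which I would not belabor).

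The main technical obstacle is the robustness analysis: showing that replacing the ideal block-encoded $A$ by one with spectral-norm error $\epsilon$ propagates only into an error of size $4\ell\sqrt{\epsilon/\alpha}+\xi$ in the output block encoding. Here $\xi$ absorbs the classical truncation error in the numerically computed phase angles, while the $4\ell\sqrt{\epsilon/\alpha}$ term comes from telescoping: each of the $\ell$ alternating factors contributes an error at most $\mathcal{O}(\sqrt{\epsilon/\alpha})$ because the projected rotations act as $2\times 2$ rotations in the two-dimensional invariant subspaces determined by the singular value decomposition, and perturbations of the singular vectors scale as $\sqrt{\epsilon/\alpha}$ rather than $\epsilon/\alpha$ (this is the sharp bound obtained via Wedin-type perturbation together with the contractive norm of each QSP step). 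Summing the $\ell$ contributions yields the stated $\mathcal{O}(\ell\sqrt{\epsilon/\alpha})$ dependence.

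Putting these pieces together gives a $(1,a+\mathcal{O}(1),4\ell\sqrt{\epsilon/\alpha}+\xi)$-block encoding of $f(A/\alpha)$ with $\mathcal{O}(\ell)$ queries to $U_A$ and $\mathcal{O}(\ell(a+1))$ primitive gates, matching the statement. The whole argument is essentially a citation to Theorems 17, 56 and Corollary 18 of Gilyén–Su–Low–Wiebe together with the robustness refinements in their Lemma 22; the only new input I would need is the classical phase-angle algorithm's runtime, for which I would cite the Haah / Dong–Meng–Whaley–Lin procedure.
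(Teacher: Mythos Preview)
The paper does not actually prove this lemma: it is listed in the Preliminaries section as a quoted result from Gily\'en--Su--Low--Wiebe \cite{gilyen2019quantum} and is used as a black box. Your sketch is a faithful outline of how that result is established in the cited reference (QSP $\to$ QSVT lifting, even/odd split via LCU, and the Lemma~22 robustness bound giving the $4\ell\sqrt{\epsilon/\alpha}$ term), so there is nothing to compare against in this paper and your proposal is appropriate as a derivation of the cited statement.
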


Therefore, given the block encoding access of a matrix $A$, one can implement many different functions based on polynomial approximation.
In the following, we list and show some results about how to efficiently approximate important functions by polynomial approximation.

\begin{lemma}[Polynomial approximations of positive power functions \cite{gilyenthesis, gilyén2022improvedquantumalgorithmsfidelity}]\label{poly.positivepower}
    Let $\Delta,\epsilon\in (0,\frac{1}{2}]$, $c\in (0,1]$ and let $f(x):=\frac{1}{2}x^c$, then there exist real odd/even polynomials $P,P'$ such that for $x\in [\Delta,1]$, $|P(x)-f(x)|\leq \epsilon$, $|P'(x)-f(x)|\leq \epsilon$. Further, we have for $x\in [-1,1]$, $|P(x)|, |P'(x)|\leq 1$.
    The degrees of polynomials $P,P'$ are $\mathcal{O}(\frac{1}{\Delta}\log(\frac{1}{\epsilon}))$.
\end{lemma}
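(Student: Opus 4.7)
The plan is to obtain both approximants by approximating appropriate odd or even extensions of $x^c$ on $[-1,-\Delta]\cup[\Delta,1]$. I would start from the integral representation $x^c = \frac{\sin(\pi c)}{\pi}\int_0^{\infty} \frac{x\, t^{c-1}}{x + t}\, dt$, valid for $c \in (0,1)$; the boundary case $c = 1$ is handled trivially by $P(x) = x/2$ together with a bounded cutoff. Truncating the integration range at both ends (the tails near $0$ and $\infty$ are controlled by the decays of $t^{c-1}$ and $1/t$ respectively) and applying a Gauss-type quadrature with $O(\log(1/\epsilon))$ nodes $\{t_j\}$ reduces the task to uniformly approximating a finite sum of terms of the form $x/(x+t_j)$ on $[\Delta,1]$.

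Next, each function $x/(x+t_j) = 1 - t_j/(x+t_j)$ can be built from the standard polynomial approximation of $1/y$ on intervals bounded away from zero. Since $x+t_j \ge \Delta$ for $x \in [\Delta,1]$, each resulting polynomial has degree $O(\Delta^{-1}\log(1/\epsilon))$ with error $\epsilon$. Aggregating the logarithmically many quadrature contributions and absorbing constants yields a polynomial $Q(x)$ of degree $O(\Delta^{-1}\log(1/\epsilon))$ satisfying $|Q(x) - \tfrac{1}{2}x^c| \le \epsilon$ on $[\Delta, 1]$.

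For the odd variant $P$, I would instead target the odd extension $\tfrac{1}{2}\operatorname{sign}(x)|x|^c$ on $[-1,-\Delta]\cup[\Delta,1]$, using the same integral representation with $|x|$ in place of $x$ and an overall factor of $\operatorname{sign}(x)$. Assembling the approximation from odd building blocks, or equivalently antisymmetrizing via $(Q(x)-Q(-x))/2$ once $Q$ is verified to approximate the odd extension on both pieces of the symmetric domain, yields an odd polynomial of the same degree. The even variant $P'$ is obtained analogously from the even extension $\tfrac{1}{2}|x|^c$ using even Chebyshev polynomials or symmetrization.

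The main obstacle will be enforcing the uniform bound $|P|, |P'| \le 1$ on the whole of $[-1,1]$, not only on the target interval. Because $x^c$ has a branch point at the origin for non-integer $c$, the polynomial built from the integral representation can have uncontrolled amplitude on $(-\Delta, \Delta)$. I would handle this by multiplying with a parity-preserving rectangle-type polynomial cutoff that is at least $1-\epsilon$ on $[\Delta,1]$, suitably small on $[-\Delta,\Delta]$, and uniformly bounded by $1$ on $[-1,1]$; such cutoffs are standard in the QSVT toolkit and themselves have degree $O(\Delta^{-1}\log(1/\epsilon))$, preserving the overall degree bound. The factor of $\tfrac{1}{2}$ in the target supplies the slack needed to absorb the cutoff perturbation while maintaining the claimed $\epsilon$ approximation on $[\Delta,1]$.
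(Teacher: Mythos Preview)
The paper does not actually prove this lemma; it imports it from the cited references (Gily\'en's thesis and the fidelity-estimation paper), so there is no in-paper argument to compare your sketch against. The cited sources do not use the Cauchy-integral representation either: their construction goes through the general QSVT machinery for targets that are smooth on $[-1,-\Delta]\cup[\Delta,1]$ (a smoothed extension followed by a bounded Chebyshev/Fourier truncation), which delivers the degree, the parity, and the global sup-bound in one stroke.

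Your plan is a reasonable alternative in spirit, but two steps would not go through as stated. First, naive Gauss quadrature on the truncated integral $\int_{t_{\min}}^{t_{\max}} x\,t^{c-1}/(x+t)\,dt$ does \emph{not} give $O(\log(1/\epsilon))$ nodes: the tail bounds force $t_{\min}\sim\epsilon^{1/c}$ and $t_{\max}\sim(1/\epsilon)^{1/(1-c)}$, so the interval is polynomially long in $1/\epsilon$ while the branch singularity of $t^{c-1}$ sits only $t_{\min}$ away, collapsing the relevant Bernstein ellipse. You need the substitution $t=e^{s}$ first, after which the integrand is analytic in a strip of fixed width and the trapezoidal rule does attain the logarithmic node count. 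Second, the parity step is the real obstacle and is not resolved by what you wrote: the building blocks $x/(x+t_j)$ have no parity, replacing $x$ by $|x|$ does not yield a polynomial, antisymmetrising $(Q(x)-Q(-x))/2$ is only harmless once $Q$ is already controlled on $[-1,-\Delta]$ (which your construction never touches), and forcing oddness via $P(x)=xR(x^{2})$ costs an extra factor of $\Delta^{-1}$ in the degree because the substituted target picks up a $\sqrt{u}$-type singularity at $u=0$. The way the references avoid this is precisely the mollified-cutoff mechanism you invoke only for boundedness: one multiplies by an odd (resp.\ even) smooth bump of degree $O(\Delta^{-1}\log(1/\epsilon))$ and approximates the resulting globally smooth odd (resp.\ even) function directly, so that parity, boundedness, and degree all come from the same step rather than from post-processing.
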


\begin{lemma}[Polynomial approximations of square root function\label{poly.sqrt}]
    Let $\Delta,\epsilon\in (0,\frac{1}{2}]$, and let $f(x)=\frac{1}{2}\sqrt{x}$. There exists a real odd polynomial $P$ with degree $\mathcal{O}(\frac{1}{\Delta}\log(\frac{1}{\epsilon}))$ such that for $x\in [\Delta,1]$, $|P(x)-f(x)|\leq \epsilon$, and $|P(x)|\leq 1$ for $x\in [-1,1]$.
\end{lemma}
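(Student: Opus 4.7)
The plan is to obtain the lemma as an immediate specialization of Lemma \ref{poly.positivepower}. Substituting $c = 1/2$ into that lemma gives precisely the target function $f(x) = \frac{1}{2}\sqrt{x}$, and its conclusion already supplies both a real odd polynomial $P$ and a real even polynomial $P'$ of degree $\mathcal{O}(\frac{1}{\Delta}\log(1/\epsilon))$ satisfying $|P(x) - f(x)| \leq \epsilon$ and $|P'(x) - f(x)| \leq \epsilon$ on $[\Delta, 1]$, together with the uniform bound $|P(x)|, |P'(x)| \leq 1$ on $[-1, 1]$. Retaining only the odd polynomial $P$ and discarding $P'$ yields the statement verbatim, so the proof is essentially a one-line corollary once Lemma \ref{poly.positivepower} is granted.

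If instead an independent derivation is required, I would first obtain an unconstrained polynomial approximation $Q$ of $\sqrt{x}$ on $[\Delta, 1]$ by a Chebyshev truncation after the affine change of variables that sends $[\Delta, 1]$ to $[-1, 1]$, and then enforce odd parity via the symmetrization $P(x) = \frac{1}{4}(Q(x) - Q(-x))$. This symmetrized object is automatically odd, and it agrees with $\frac{1}{2}Q$ on $[\Delta, 1]$ up to an error of $\frac{1}{4}|Q(-x)|$; provided $Q$ is suitably small on $[-1, -\Delta]$, the approximation bound transfers to $P$, and the uniform $|P(x)| \leq 1$ bound on $[-1, 1]$ follows from the analogous bound on $Q$ together with odd symmetry.

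The main obstacle in the independent approach is this parity-enforcement step: $Q$ as produced by a direct Chebyshev argument need not be small on the negative half of the interval, and bare symmetrization can fail to retain approximation accuracy. One pays for this by precomposing $Q$ with a rectangular-window polynomial that approximately vanishes on $[-1, -\Delta]$, which inflates the degree by roughly a factor of $1/\sqrt{\Delta}$ and yields the stated degree $\mathcal{O}(\frac{1}{\Delta}\log(1/\epsilon))$. This parity-handling is already internalized in Lemma \ref{poly.positivepower}, which is why invoking that lemma is the most economical route and is the plan I would follow.
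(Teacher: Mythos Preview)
Your primary plan is exactly the paper's proof: it simply sets $c=\tfrac{1}{2}$ in Lemma~\ref{poly.positivepower} and takes the odd polynomial $P$ guaranteed there. The alternative independent derivation you sketch is not needed for this paper.
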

\begin{proof}
    This can be achieved by setting $c=\frac{1}{2}$ and choosing the odd polynomial in Lemma~\ref{poly.positivepower}.
\end{proof}

\subsection{Quantum algorithms for open system simulation}

We first introduce the current state-of-the-art result for time-independent Lindbladian simulation \cite{li2023opensystem}. The basic idea in Ref \cite{li2023opensystem} is to use high-order series
expansion following the spirit of Duhamel’s principle \cite{hartman2002ordinary} where the authors separate the Lindbladian into two parts: the jumping part and the drifting part. The truncated expansion is then realized on quantum computers with the aid of linear combinations of unitaries (LCU) \cite{childs2012hamiltonians} and oblivious amplitude amplification for isometry \cite{cleve2016efficient}.

\begin{lemma}[Quantum algorithm for time-independent Lindbladian \cite{li2023opensystem}]\label{thm.opensystem.independent}
Given access to an $(\alpha_0, a, \epsilon')$-block encoding $U_H$ of $H$, and an $(\alpha_j, a, \epsilon')$-block encoding $U_{F_j}$ for each $F_j$, where $j\in[m]$.
Let $\alpha_{\mathcal{L}}:=\alpha_0+\frac{1}{2}\sum_{j=1}^m \alpha_j^2$.
For all $t, \epsilon \geq 0$ with $\epsilon' \leq \epsilon / (\alpha_{\mathcal{L}}T)$, there exists a quantum algorithm that outputs a purification of $\tilde{\rho}_t$ such that $\|\tilde{\rho}_t-e^{\mathcal{L}t}(\rho)\|_1\leq \epsilon$, using
\begin{equation}
    \mathcal{O} \left( \alpha_{\mathcal{L}} T \frac{\log(\alpha_{\mathcal{L}} T / \epsilon)}{\log \log(\alpha_{\mathcal{L}} T / \epsilon)} \right)
\end{equation}
queries to $U_H$ and $U_{F_j}$, $\mathcal{O} \left( m\alpha_{\mathcal{L}} T \text{poly}\log(\alpha_\mathcal{L}T\epsilon^{-1}) \right)$ additional 1- and 2-qubit gates, and $\mathcal{O}(\text{poly}\log(\alpha_{\mathcal{L}}T\epsilon^{-1}))$ number of additional ancilla qubits besides those used for block encodings \cite{commentancilla}. 
\end{lemma}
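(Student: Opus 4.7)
The plan is to follow the high-order truncated-series strategy of Berry--Childs--Cleve--Kothari--Somma, adapted from unitary Hamiltonian simulation to Lindbladian simulation by combining a Duhamel expansion in Dyson form with a linear combination of unitaries (LCU) compilation and oblivious amplitude amplification for isometries. First, decompose $\mathcal{L} = \mathcal{L}_D + \mathcal{L}_J$, where $\mathcal{L}_D[\rho] = -iG\rho + i\rho G^\dagger$ with the non-Hermitian generator $G := H - \tfrac{i}{2}\sum_j F_j^\dagger F_j$, and $\mathcal{L}_J[\rho] = \sum_j F_j \rho F_j^\dagger$. Then $e^{\mathcal{L}_D t}[\rho] = K(t)\rho K(t)^\dagger$ with $K(t) := e^{-iGt}$ a contraction (since $G + G^\dagger = -i\sum_j F_j^\dagger F_j\preceq 0$ in the anti-Hermitian sense making $\mathrm{Im}(G)\succeq 0$), and iterating Duhamel's identity yields the Dyson series
\begin{equation}
e^{\mathcal{L}t}[\rho] = \sum_{k=0}^\infty \int_{0 \leq s_1 \leq \cdots \leq s_k \leq t} \sum_{j_1,\ldots,j_k} \mathcal{K}_{k}\, \rho\, \mathcal{K}_{k}^\dagger \, ds_1\cdots ds_k,
\end{equation}
where $\mathcal{K}_{k} = K(t-s_k)F_{j_k}K(s_k - s_{k-1})\cdots F_{j_1}K(s_1)$.

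Next, partition $[0,T]$ into $r = \lceil\alpha_{\mathcal{L}}T\rceil$ segments of length $\tau \lesssim 1/\alpha_{\mathcal{L}}$. Within each segment the $k$-th term has operator norm bounded by $(\alpha_{\mathcal{L}}\tau)^k / k!$, so truncating at order $K_* = \mathcal{O}(\log(r/\epsilon)/\log\log(r/\epsilon))$ yields per-segment trace-distance error $O(\epsilon/r)$; the time-ordered integrals are replaced by Gauss--Legendre quadratures with $\mathrm{poly}\log(\alpha_{\mathcal{L}}T/\epsilon)$ nodes, and $r$ segments composed sequentially give total error $O(\epsilon)$. Each segment is then compiled as one large LCU on a purification: the jump $\sum_j \ket{j}_{\mathrm{env}}\otimes F_j$ is block-encoded with normalization $\sqrt{\sum_j\alpha_j^2}$ from controlled calls to $U_{F_j}$; the contraction $K(\tau')$ for $\tau'\leq\tau$ is block-encoded via a truncated Taylor series of $G$, where $G$ is itself LCU-block-encoded from $U_H$ and products $U_{F_j}^\dagger U_{F_j}$ with normalization $\alpha_{\mathcal{L}}$; and all $K_*+1$ Dyson orders, all quadrature nodes, and all inner Taylor coefficients are fused into a single LCU over a $\mathrm{poly}\log(\alpha_{\mathcal{L}}T/\epsilon)$-qubit coefficient register whose total normalization is $\sum_{k\leq K_*}(\alpha_{\mathcal{L}}\tau)^k/k! = \Theta(1)$ by the choice of $\tau$.

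Oblivious amplitude amplification for isometries~\cite{cleve2016efficient} then converts the resulting block encoding of the per-segment channel into a purification of $e^{\mathcal{L}\tau}[\rho]$ with $O(1)$ overhead, and chaining $r$ such segments produces the purification of $\tilde\rho_T$ claimed in the lemma. The main obstacle is achieving the $\log/\log\log$ scaling in $\epsilon$ rather than the $\log$ scaling that a naive nested implementation would yield: this forbids treating the inner Taylor expansion of $K(\tau')$ and the outer Dyson truncation as separate LCU layers, and instead requires fusing them into one flat LCU whose joint truncation error is bounded by a single Stirling estimate so that oblivious amplification succeeds with constant probability per segment without recursive boosting. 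Granting this combinatorial bookkeeping, the per-segment query count to $U_H$ and $U_{F_j}$ is $\mathcal{O}(K_*)$, so the total query cost is $r\cdot\mathcal{O}(K_*) = \mathcal{O}\!\left(\alpha_{\mathcal{L}}T\,\log(\alpha_{\mathcal{L}}T/\epsilon)/\log\log(\alpha_{\mathcal{L}}T/\epsilon)\right)$, and the gate and ancilla counts follow from summing the coefficient register, the jump-label register, and the block-encoding ancillas, matching the stated $\mathrm{poly}\log(\alpha_{\mathcal{L}}T/\epsilon)$ overhead.
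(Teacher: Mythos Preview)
The paper does not prove this lemma; it is quoted verbatim as a result from Ref.~\cite{li2023opensystem}, with only a one-sentence summary of the method (``high-order series expansion following the spirit of Duhamel's principle \ldots\ separate the Lindbladian into two parts: the jumping part and the drifting part \ldots\ LCU \ldots\ and oblivious amplitude amplification for isometry''). Your sketch reproduces precisely that architecture---Duhamel/Dyson expansion in the drift/jump decomposition, per-segment truncation at order $K_*=\mathcal{O}(\log/\log\log)$, LCU compilation, and oblivious amplitude amplification for isometries---so it is consistent with the paper's description of the cited proof and with the original source.
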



For the time-dependent case \cite{he2024efficientoptimalcontrolopen}, we consider the following input model. The basic idea in Ref \cite{he2024efficientoptimalcontrolopen} follows the time-independent case. Assume we are given access to an \((\alpha_0(t), a, \epsilon')\)-block encoding \(U_{H(t)}\) of \(H(t)\), and an \((\alpha_j(t), a, \epsilon')\)-block encoding \(U_{F_j(t)}\) for each \(F_j(t)\) for all \(0 \leq t \leq T\), i.e.,
\begin{align}
    (\bra{0}_a\otimes I)U_{H(t)}(\ket{0}_a\otimes I)\approx\sum_t |t\rangle \langle t|\otimes \frac{H(t)}{\alpha_0(t)},\\
    (\bra{0}_a\otimes I)U_{L_j(t)}(\ket{0}_a\otimes I)\approx\sum_t |t\rangle \langle t|\otimes \frac{L_j(t)}{\alpha_j(t)}.
\end{align}
Define
\begin{align}
    \alpha_{\mathcal{L'}}(T):= \int_0^T d\tau \alpha_{\mathcal{L}}(\tau),
\end{align}
where $\alpha_{\mathcal{L}}(\tau)=\alpha_0(\tau)+\frac{1}{2}\sum_{j=1}^m \alpha_j^2(\tau)$.
Assume we have access to
\begin{align}
    O_{\alpha_{\mathcal{L}'}}\ket{t}\ket{z}&=\ket{t}\ket{z\oplus \alpha^{-1}_{\mathcal{L}'}(t)},\\
    O_{H, \mathrm{norm}}\ket{t}\ket{z}&=\ket{t}\ket{z\oplus \alpha_0(t)},\\
    O_{F_j, \mathrm{norm}}\ket{t}\ket{z}&=\ket{t}\ket{z\oplus \alpha_j(t)}.
\end{align}

\begin{lemma}[Quantum algorithm for time-dependent Lindbladian \cite{he2024efficientoptimalcontrolopen}\label{thm.opensystem.dependent}]
Given access to an \((\alpha_0(t), a, \epsilon')\)-block encoding \(U_{H(t)}\) of \(H(t)\), and an \((\alpha_j(t), a, \epsilon')\)-block encoding \(U_{F_j(t)}\) for each \(F_j(t)\) for all \(0 \leq t \leq T\), where $j\in [m]$.
Suppose further that \(\epsilon' \leq \epsilon / (2T(m+1))\). Then, there exists a quantum algorithm that outputs a purification of \(\tilde{\rho}_T\) of \(\rho(T)\) where
$ \left\| \tilde{\rho}_T - \mathcal{T} e^{\int_0^T d\tau \mathcal{L}(\tau)} (\rho_0) \right\|_1 \leq \epsilon $ using 
\begin{equation}
\mathcal{O}\left(\alpha_{\mathcal{L'}}T \left( \frac{\log\left( \alpha_{\mathcal{L'}}(T) / \epsilon \right)}{\log \log \left( \alpha_{\mathcal{L'}}(T) / \epsilon \right)}\right)^2 \right)
\end{equation}
queries to \(U_{H(t)}\), \(U_{F_j(t)}, O_{\alpha_{\mathcal{L}'}}, O_{H, \mathrm{norm}}, O_{F_j, \mathrm{norm}}\), $\mathcal{O} \left( (m+n)\alpha_{\mathcal{L}'} T \text{poly}\log((\alpha_{\mathcal{L}'}+\beta_\mathcal{L})T\epsilon^{-1}) \right)$ additional 1- and 2-qubit gates, and $\mathcal{O}(\text{poly}\log((\alpha_{\mathcal{L}'} + \beta_{\mathcal{L}} ) T/\epsilon))$ number of additional ancilla qubits besides those used for block encodings, where we define $\beta_{\mathcal{L}} = \max_{t \in [0,T]} \|\frac{\partial H(t)}{\partial t}\| + \sum_{j=1}^m \max_{t\in[0,T]} \|\frac{\partial L_j(t)}{\partial t}\| $. 
\end{lemma}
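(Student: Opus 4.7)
The plan is to lift the time-independent construction of Lemma~\ref{thm.opensystem.independent} to the time-ordered setting by routing all dependence on $\tau$ through a clock register, so that the truncated Dyson expansion can still be implemented by an LCU followed by oblivious amplitude amplification for isometries.

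First, I would split the generator as $\mathcal{L}(\tau)=\mathcal{K}(\tau)+\mathcal{J}(\tau)$, where $\mathcal{K}(\tau)[\rho]=-i[H(\tau),\rho]-\tfrac{1}{2}\sum_j\{F_j(\tau)^\dagger F_j(\tau),\rho\}$ is the dissipative-drift part (generating a contractive, non-trace-preserving semigroup via the effective non-Hermitian Hamiltonian $H(\tau)-\tfrac{i}{2}\sum_j F_j^\dagger F_j$) and $\mathcal{J}(\tau)[\rho]=\sum_j F_j(\tau)\rho F_j(\tau)^\dagger$ is the jump channel. Applying Duhamel's principle to this splitting gives a time-ordered Dyson series whose $k$-th term is a nested integral of $k$ jumps interleaved with drift propagators. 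Since each factor has operator norm bounded by $\alpha_{\mathcal{L}}(\tau)$, the truncated tail at order $K$ is bounded by $(\alpha_{\mathcal{L}'}(T))^K/K!$ in trace norm, so choosing $K=\Theta\!\bigl(\log(\alpha_{\mathcal{L}'}(T)/\epsilon)/\log\log(\alpha_{\mathcal{L}'}(T)/\epsilon)\bigr)$ delivers error $\epsilon/2$.

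Next, I would implement each truncated term on the quantum computer. Time discretize $[0,T]$ into a fine grid with $M=\mathrm{poly}(\alpha_{\mathcal{L}'}T,\beta_{\mathcal{L}} T,1/\epsilon)$ slices (the $\beta_{\mathcal{L}}$ scale enters here because we must bound the Lipschitz error from replacing $H(\tau),F_j(\tau)$ by their nearest gridpoint values), and load the gridpoint labels into a clock register. The HAM-T oracles $U_{H(t)}$, $U_{F_j(t)}$ then give a block encoding of $\sum_t |t\rangle\!\langle t|\otimes H(t)/\alpha_0(t)$ and similarly for jumps, while $O_{\alpha_{\mathcal{L}'}}$, $O_{H,\mathrm{norm}}$, $O_{F_j,\mathrm{norm}}$ prepare the state that encodes the correct amplitudes $\sqrt{\alpha_0(t_i)\,\Delta\tau/\alpha_{\mathcal{L}'}(T)}$ and $\sqrt{\alpha_j(t_i)^2\Delta\tau/(2\alpha_{\mathcal{L}'}(T))}$ needed to weigh each nested integrand correctly. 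LCU over the combined (order, slot, kind-of-operator, time) register then produces an isometry whose sub-normalization on the correct ancilla branch is exactly the target Kraus representation of the truncated Dyson propagator, up to $\epsilon/4$.

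Finally, I would apply one round of oblivious amplitude amplification for isometries (as in the Cleve–Wang framework used for the time-independent case) to convert an $\mathcal{O}(1/\alpha_{\mathcal{L}'}T)$ success amplitude per segment into near-certain success, and chain $\mathcal{O}(\alpha_{\mathcal{L}'}T)$ such segments together; this linear scaling in $\alpha_{\mathcal{L}'}T$ combined with the $K$ queries per segment yields the advertised query count $\widetilde{\mathcal{O}}(\alpha_{\mathcal{L}'}T\,(\log(\alpha_{\mathcal{L}'}T/\epsilon)/\log\log(\alpha_{\mathcal{L}'}T/\epsilon))^2)$, where one $\log/\log\log$ factor comes from the truncation order $K$ and the second from the per-segment amplification and compression of the clock register. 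The gate count and ancilla count follow by summing the cost of preparing the LCU selector state, the controlled applications of the HAM-T oracles, and the reflections used in amplification. The main obstacle is the second item: controlling the time-discretization error in a way that preserves the $\log/\log\log$ scaling rather than degrading to a plain $\log$, which requires a careful Hölder-type bound on the nested integrals using $\beta_{\mathcal{L}}$ together with a compressed encoding of the time grid so that the clock register does not dominate the ancilla count.
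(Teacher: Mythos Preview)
The paper does not prove this lemma; it is quoted directly from the external reference \cite{he2024efficientoptimalcontrolopen}, and the only description given is the one-sentence remark that ``the basic idea in Ref~\cite{he2024efficientoptimalcontrolopen} follows the time-independent case,'' namely a Duhamel/Dyson expansion with the drift/jump splitting, implemented by LCU and oblivious amplitude amplification for isometries. Your sketch follows exactly that template and is consistent with what the paper reports, so there is nothing to compare against beyond noting that your outline matches the paper's high-level summary of the cited construction.
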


One can also consider a simplified input model.
Let $\alpha_{\mathcal{L}}\geq \max_t \{\|H(t)\|, F_j(t)\}$ for $0\leq t\leq T$ and $j\in [m]$.
Note that we have $\alpha_{\mathcal{L}} T\geq \alpha_{\mathcal{L}'}(T)$.
Assume we are given access to an \((\alpha_{\mathcal{L}}, a, \epsilon')\)-block encoding \(U_{H(t)}\) of \(H(t)\), and an \((\alpha_{\mathcal{L}}, a, \epsilon')\)-block encoding \(U_{F_j(t)}\) for each \(F_j(t)\) for all \(0 \leq t \leq T\), i.e.,
\begin{align}
    (\bra{0}_a\otimes I)U_{H(t)}(\ket{0}_a\otimes I)\approx\sum_t |t\rangle \langle t|\otimes \frac{H(t)}{\alpha_{\mathcal{L}}},\\
    (\bra{0}_a\otimes I)U_{F_j(t)}(\ket{0}_a\otimes I)\approx\sum_t |t\rangle \langle t|\otimes \frac{F_j(t)}{\sqrt{\alpha_{\mathcal{L}}}}.
\end{align}
Since the embedding factor $\alpha_{\mathcal{L}}$ is time-independent in this case, if we know it classically, one can construct the oracles required in the previous input model.
Based on these modifications, we have the following lemma:

\begin{lemma}[Quantum algorithm for time-dependent Lindbladian, simplified\label{thm.opensystem.dependent.simple}]
Given access to an \((\alpha_{\mathcal{L}}, a, \epsilon')\)-block encoding \(U_{H(t)}\) of \(H(t)\), and an \((\alpha_{\mathcal{L}}, a, \epsilon')\)-block encoding \(U_{F_j(t)}\) for each \(F_j(t)\) for all \(0 \leq t \leq T\), where $j\in [m]$.
Let $\alpha_{\mathcal{L}}$ be known classically.
Suppose further that \(\epsilon' \leq \epsilon / (2T(m+1))\). Then, there exists a quantum algorithm that outputs a purification of \(\tilde{\rho}_T\) of \(\rho(T)\) where
$ \left\| \tilde{\rho}_T - \mathcal{T} e^{\int_0^T d\tau \mathcal{L}(\tau)} (\rho_0) \right\|_1 \leq \epsilon $
using
\begin{equation}
\mathcal{O}\left(\alpha_{\mathcal{L}} T \left( \frac{\log\left( \alpha_{\mathcal{L}}(T) / \epsilon \right)}{\log \log \left( \alpha_{\mathcal{L}}(T) / \epsilon \right)}\right)^2 \right)
\end{equation}
queries to \(U_{H(t)}\), \(U_{F_j(t)}\), $\mathcal{O} \left( (m+n)\alpha_{\mathcal{L}} T \text{poly}\log((\alpha_{\mathcal{L}}+\beta_\mathcal{L})T\epsilon^{-1}) \right)$ additional 1- and 2-qubit gates, and $\mathcal{O}(\text{poly}\log((\alpha_{\mathcal{L}} + \beta_{\mathcal{L}} ) T/\epsilon))$ number of additional ancilla qubits besides those used for block encodings.
\end{lemma}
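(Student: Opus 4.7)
The plan is to derive this simplified statement as a direct specialization of Lemma~\ref{thm.opensystem.dependent}. Since the block-encoding normalization $\alpha_{\mathcal{L}}$ is time-independent and known classically, I would first show that the simplified input exactly fits into the general time-dependent input model by the constant choices $\alpha_0(t) \equiv \alpha_{\mathcal{L}}$ and $\alpha_j(t) \equiv \sqrt{\alpha_{\mathcal{L}}}$. With these choices, the given $(\alpha_{\mathcal{L}}, a, \epsilon')$-block encodings $U_{H(t)}$ and $U_{F_j(t)}$ satisfy the Hamiltonian and jump-operator block-encoding conditions required in Lemma~\ref{thm.opensystem.dependent} verbatim.

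Next, I would construct the auxiliary oracles $O_{\alpha_{\mathcal{L}'}}$, $O_{H,\mathrm{norm}}$, and $O_{F_j,\mathrm{norm}}$ required by the general lemma. Because the target values are constants (respectively $\alpha_{\mathcal{L}'}^{-1}(t) = (\alpha_{\mathcal{L}}(1+m/2)\,t)^{-1}$, $\alpha_0(t) = \alpha_{\mathcal{L}}$, and $\alpha_j(t) = \sqrt{\alpha_{\mathcal{L}}}$), these oracles act as $\ket{t}\ket{z} \mapsto \ket{t}\ket{z \oplus c}$ for a classically known $c$, and can therefore be implemented by a sequence of $\mathrm{poly}\log$ many single-qubit $X$ gates controlled only on the output register — in particular, they require no queries to $U_{H(t)}$ or $U_{F_j(t)}$ and add only $\mathrm{poly}\log(\alpha_{\mathcal{L}}T/\epsilon)$ gates and ancillas. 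This reduction converts the simplified model into the full HAM-T style model of Lemma~\ref{thm.opensystem.dependent} without overhead.

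I would then invoke Lemma~\ref{thm.opensystem.dependent} with the parameter $\alpha_{\mathcal{L}'}(T) = \int_0^T \alpha_{\mathcal{L}}(\tau)\,d\tau$. Under the constant choices above, $\alpha_{\mathcal{L}'}(T) = (1+m/2)\,\alpha_{\mathcal{L}}\,T = \Theta(\alpha_{\mathcal{L}} T)$ up to the factor $(1+m/2)$, which may be absorbed into the $\widetilde{\mathcal{O}}$ notation whenever $m$ is treated as a constant of the input (as is standard in this setting) or carried explicitly into the gate count. Plugging this bound into the query complexity of Lemma~\ref{thm.opensystem.dependent} yields exactly
\[
\mathcal{O}\!\left(\alpha_{\mathcal{L}} T\!\left(\frac{\log(\alpha_{\mathcal{L}} T/\epsilon)}{\log\log(\alpha_{\mathcal{L}} T/\epsilon)}\right)^{\!2}\right),
\]
and the stated bounds on the additional gate and ancilla counts follow from the same substitution together with the negligible cost of the constant-oracle implementations.

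The bulk of the work is purely bookkeeping: verifying that the precision parameter $\epsilon'\leq \epsilon/(2T(m+1))$ transfers through unchanged, checking that $\beta_{\mathcal{L}}$ depends only on $H(t)$ and $F_j(t)$ and hence is unaffected by the oracle reparameterization, and confirming that the trace-distance guarantee $\|\tilde{\rho}_T - \mathcal{T} e^{\int_0^T \mathcal{L}(\tau)d\tau}(\rho_0)\|_1 \leq \epsilon$ is preserved verbatim. The only mildly subtle point — what I would flag as the main obstacle, though it is still minor — is ensuring the normalization convention of the jump operators is consistent: the stated simplified model uses subnormalization $\sqrt{\alpha_{\mathcal{L}}}$ rather than $\alpha_{\mathcal{L}}$ inside $U_{F_j(t)}$, and one must check that this matches $\alpha_j(t) = \sqrt{\alpha_{\mathcal{L}}}$ so that the sum $\alpha_0(t) + \tfrac12\sum_j \alpha_j^2(t)$ reproduces the right $\Theta(\alpha_{\mathcal{L}})$ scaling used in the query count.
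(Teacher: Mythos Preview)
Your proposal is correct and follows essentially the same route as the paper: the paper's argument, stated in the paragraph immediately preceding the lemma, is precisely that when $\alpha_{\mathcal{L}}$ is a time-independent constant known classically, the norm oracles $O_{\alpha_{\mathcal{L}'}}$, $O_{H,\mathrm{norm}}$, $O_{F_j,\mathrm{norm}}$ can be constructed for free, so one may invoke Lemma~\ref{thm.opensystem.dependent} directly with $\alpha_0(t)\equiv\alpha_{\mathcal{L}}$ and $\alpha_j(t)\equiv\sqrt{\alpha_{\mathcal{L}}}$. Your write-up is in fact more careful than the paper's (you explicitly track the $(1+m/2)$ factor in $\alpha_{\mathcal{L}'}(T)$ and flag the jump-operator normalization convention), but the underlying reduction is identical.
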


\subsection{Amplitude amplification and amplitude estimation \label{sec:aaae}}
\begin{lemma}[Amplitude amplification with ancilla \cite{brassard2002quantum,childs2017quantum}]\label{aa}
Given an $n$-qubit oracle $U_i$ and a $(1+n)$-qubit oracle $U_m$ satisfy $U_i|0\rangle_n=|\psi_0\rangle$ and $U_m|0\rangle|\psi_0\rangle=c_0|0\rangle|\psi_1\rangle+c_1|1\rangle|\psi_2\rangle$, we can use: $$U_1=I_{1+n}-2|0\rangle\langle 0|\otimes I_n,$$and: $$U_2=U_m(I\otimes U_i(I_{1+n}-2|0\rangle\langle 0|_{1+n})I\otimes U_i^\dag)U_m^\dag,$$ to build the Grover operator $U_{Gr}=U_2 U_1$ \rm{\cite{grover1996fast}}. Implementing $U_{Gr}$ repeatedly for $\mathcal{O}(c_0^{-1})$ times on $U_m|0\rangle|\psi_0\rangle$, we are able to prepare a state with $\mathcal{O}(1)$ overlap with the state $|0\rangle|\psi_1\rangle$. 
\end{lemma}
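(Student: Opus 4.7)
The plan is to recognize this statement as the standard amplitude amplification argument and reduce it to the two-reflections analysis of Brassard et al. Let $|\phi\rangle := U_m|0\rangle|\psi_0\rangle = c_0|G\rangle + c_1|B\rangle$, where the ``good'' component is $|G\rangle := |0\rangle|\psi_1\rangle$ and the ``bad'' component is $|B\rangle := |1\rangle|\psi_2\rangle$. I would first identify the two reflections that build $U_{Gr}$ and then carry out the standard 2D rotation analysis inside the real subspace spanned by $|G\rangle$ and $|B\rangle$.

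First I would verify that $U_1 = I_{1+n} - 2|0\rangle\langle 0|\otimes I_n$ acts as a reflection that flips the sign of $|G\rangle$ and leaves $|B\rangle$ invariant, since the projector $|0\rangle\langle 0|\otimes I_n$ exactly picks out the good subspace containing $|G\rangle$. Then I would simplify the inner part of $U_2$: since $(I\otimes U_i)|0\rangle|0\rangle = |0\rangle|\psi_0\rangle$, conjugating $I_{1+n}-2|0\rangle\langle 0|_{1+n}$ by $I\otimes U_i$ yields $I - 2|0\rangle|\psi_0\rangle\langle 0|\langle\psi_0|$, and a further conjugation by $U_m$ gives $U_2 = I - 2|\phi\rangle\langle\phi|$, i.e., the reflection about the hyperplane orthogonal to $|\phi\rangle$.

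With both reflections identified, $U_{Gr} = U_2 U_1$ is a product of two reflections and hence acts as a rotation in the 2D subspace $\operatorname{span}\{|G\rangle,|B\rangle\}$, which contains $|\phi\rangle$. Writing $c_0 = \sin\theta$, a direct computation of the rotation angle shows
\begin{equation}
U_{Gr}^{\,k}\,|\phi\rangle = \sin\!\bigl((2k+1)\theta\bigr)\,|G\rangle + \cos\!\bigl((2k+1)\theta\bigr)\,|B\rangle.
\end{equation}
Choosing $k = \lfloor \pi/(4\theta)\rfloor$, which is $\Theta(c_0^{-1})$ because $\theta \geq \sin\theta = c_0$ for $c_0\in[0,1]$, drives the $|G\rangle$-amplitude to a constant of order one, giving $\Theta(1)$ overlap with $|0\rangle|\psi_1\rangle$ as claimed.

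The argument is essentially bookkeeping and there is no substantive obstacle; the only mildly delicate point is that $c_0,c_1$ and $|\psi_1\rangle,|\psi_2\rangle$ may be complex, but since both $U_1$ and $U_2$ preserve $\operatorname{span}\{|G\rangle,|B\rangle\}$, any global phases on $|G\rangle$ and $|B\rangle$ can be absorbed into the basis and the two-dimensional rotation analysis carries through unchanged.
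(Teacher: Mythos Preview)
Your proposal is correct and follows the standard two-reflection analysis of Brassard et al.; the paper itself does not give a proof of this lemma but simply states it with citations to \cite{brassard2002quantum,childs2017quantum}, so your argument is precisely the one being invoked. The identification of $U_1$ as the good-subspace reflection and of $U_2$ as $I-2|\phi\rangle\langle\phi|$ via the conjugation chain is exactly right, and the 2D rotation computation with $k=\lfloor \pi/(4\theta)\rfloor=\Theta(c_0^{-1})$ is the textbook finish.
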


\begin{lemma}[Improved amplitude estimation \cite{aaronson2020quantum,grinko2021iterative}]\label{ae}
Given two quantum states $|S_1\rangle=U_1|0\rangle$ and $|S_2\rangle=U_2|0\rangle$, we can query the Grover operator \rm{\cite{grover1996fast}}:
$$U_G=(I-2U_1|0\rangle\langle 0|U_1^\dag)(I-2U_2|0\rangle\langle 0|U_2^\dag),$$
for $\mathcal{O}(\varepsilon^{-1}\log(\delta^{-1}))$ times to estimate $\mu=|\langle S_2|S_1\rangle|$ up to an \textbf{additive} error $\varepsilon$ (i.e. $|\bar{\mu}-\mu|\leq \varepsilon$) with a success probability at least $1-\delta$.
\end{lemma}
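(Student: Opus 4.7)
The plan is to exploit the two-dimensional invariant subspace structure of the Grover operator $U_G$ and reduce the amplitude-estimation task to estimating a single eigenphase. First, I would observe that each factor $I - 2U_i|0\rangle\langle 0|U_i^{\dagger}$ equals $I - 2|S_i\rangle\langle S_i|$, the reflection through the hyperplane orthogonal to $|S_i\rangle$, so $U_G$ is a product of two reflections. Writing $\langle S_1|S_2\rangle = \mu e^{i\phi}$ and absorbing the global phase (for instance by replacing $U_2$ with $e^{-i\phi}U_2$, which leaves $|S_2\rangle\langle S_2|$ unchanged), I may assume the overlap is the real number $\mu = \cos\theta$ with $\theta \in [0,\pi/2]$. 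The subspace $V := \mathrm{span}\{|S_1\rangle,|S_2\rangle\}$ is then invariant under $U_G$, and a direct computation shows that the product of two reflections through lines separated by angle $\theta$ acts on $V$ as a rotation by $2\theta$; hence $U_G|_V$ has eigenvalues $e^{\pm 2i\theta}$.

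Next, I would apply quantum phase estimation to $U_G$ with an input state lying in $V$ (for instance $|S_1\rangle$, prepared by a single call to $U_1$). With $\mathcal{O}(\varepsilon^{-1})$ controlled applications of $U_G$ and $\mathcal{O}(\log(1/\varepsilon))$ ancilla qubits, QPE returns, with constant probability, an estimate $\tilde\theta$ of $\pm\theta$ satisfying $|\tilde\theta - \theta| \leq \varepsilon$; the output $\tilde\mu := |\cos\tilde\theta|$ then satisfies $|\tilde\mu - \mu| \leq \varepsilon$ by the $1$-Lipschitz property of cosine, and the sign ambiguity in the eigenphase is automatically absorbed by the absolute value. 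The success probability is boosted from constant to $1 - \delta$ by taking the median of $\mathcal{O}(\log(1/\delta))$ independent runs and applying a Chernoff bound, yielding the advertised $\mathcal{O}(\varepsilon^{-1}\log(1/\delta))$ total queries to $U_G$. The improved variants of Aaronson--Rall and Grinko et al.\ replace the QFT-based phase estimation by an iterative Kitaev-style routine that avoids the inverse QFT and removes constant-factor overheads, but the asymptotic scaling in $\varepsilon$ and $\delta$ is unchanged.

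The main subtlety is in the first step: verifying that $V$ is invariant under both reflections (and hence under $U_G$), correctly diagonalizing $U_G|_V$, and handling the degenerate cases $\mu \to 0$ and $\mu \to 1$ where $V$ collapses or $\theta$ approaches the boundary of the QPE window and the Lipschitz translation from phase error to amplitude error becomes delicate. A secondary point is the median-boosting argument, which requires each individual run to output an $\varepsilon$-accurate estimate with probability strictly greater than $1/2$ before the Chernoff bound applies; this is ensured by a small constant adjustment to the QPE precision. Beyond these checks, no new techniques beyond standard QPE analysis and the Grover/Szegedy two-reflection formalism are needed, so the proof reduces to assembling known components.
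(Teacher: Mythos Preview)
The paper does not supply its own proof of this lemma; it is quoted as a black-box result from the cited references, with only the remark that the improved variants avoid the quantum Fourier transform and tighten the $\delta$-dependence from $\delta^{-1}$ to $\log(\delta^{-1})$. So there is no paper-side argument to compare against.

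Your proposal is a sound derivation of the stated query bound. The two-reflection analysis is correct: in $V=\mathrm{span}\{|S_1\rangle,|S_2\rangle\}$ the operator $U_G$ acts as a rotation by $2\theta$ with $\mu=\cos\theta$, $|S_1\rangle$ decomposes equally over the two eigenvectors, and textbook QPE with $\mathcal{O}(\varepsilon^{-1})$ controlled calls returns $\pm\theta$ to additive error $\varepsilon$ with constant probability. Median-of-$\mathcal{O}(\log(1/\delta))$ then gives the claimed total $\mathcal{O}(\varepsilon^{-1}\log(1/\delta))$, and the Lipschitz step $|\cos\tilde\theta-\cos\theta|\le|\tilde\theta-\theta|$ is valid everywhere, so the boundary cases $\mu\to 0,1$ are harmless.

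The only mismatch is methodological rather than mathematical: what you actually prove is the Brassard--H{\o}yer--Mosca--Tapp bound with median boosting, which already attains the advertised scaling but still uses the inverse QFT. The references the lemma cites (Aaronson--Rall; Grinko et al.) instead run an adaptive, Kitaev-style sequence of Grover iterates of geometrically increasing length with single-qubit Hadamard tests, dispensing with the QFT and the phase-estimation ancilla register altogether. You acknowledge this in your final paragraph, but strictly speaking your argument establishes the complexity claim, not the additional ``no QFT'' feature that the paper highlights in the sentence following the lemma. For the purposes of the query bound stated in the lemma itself, your proof is complete.
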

Note that Lemma \ref{ae} is an improved version of the original amplitude estimation algorithm \cite{brassard2002quantum}. Compared with the original one, there is no need to build quantum Fourier transform circuits, and the dependency on $\delta$ has been improved from $\delta^{-1}$ to $\log(\delta^{-1})$.

\section{Solving linear ODEs via Lindbladians\label{sec: slovl}}
\subsection{Non-diagonal density matrix encoding}
Similar to the idea of block encoding where a matrix $M$ of interest is encoded into the upper-left block of a unitary operator, the non-diagonal density matrix encoding (NDME) \cite{shang2024estimating} aims to encode the matrix $M$ into a non-diagonal block of a density matrix:
\begin{definition}[Non-diagonal density matrix encoding (NDME)]
Given an $(l+n)$-qubit density matrix $\rho_M$ and a $n$-qubit matrix $M$, if $\rho_M$ satisfies:
$$(\langle s_1|_l\otimes I_n) \rho_M (|s_2\rangle_l\otimes I_n)=\gamma M,$$
where $|s_1\rangle_l$ and $|s_2\rangle_l$ are two different computational basis states of the $l$-qubit ancilla system and $\gamma\geq 0$, then $\rho_M$ is called an $(l+n,|s_1\rangle,|s_2\rangle,\gamma)$-NDME of $M$.
\end{definition}
\noindent The non-diagonal block of density matrices can let us jump out of the Hermitian and positive semi-definite restrictions of density matrices. In this work, we will focus on $(1+n,|0\rangle,|1\rangle,\gamma)$-NDME where $\rho_M$ has the form:
\begin{eqnarray}
\rho_M=\begin{pmatrix}
\cdot & \gamma M \\
\gamma M^\dag & \cdot
\end{pmatrix}.
\end{eqnarray}
Since $\rho_M$ is Hermitian, it is also a $(1+n,|1\rangle,|0\rangle,\gamma)$-NDME of $M^\dag$.

Given the definition of NDME, we can consider a class of operations we can implement on $M$. We consider applying a quantum channel $\mathcal{C}[\cdot]$ to $\rho_{M}$ with the form:
\begin{equation}
\mathcal{C}[\rho_{M}]=\sum_i \begin{pmatrix}
K_i & 0\\
0& L_i
\end{pmatrix}\rho_{M} \begin{pmatrix}
K_i^\dag & 0\\
0& L_i^\dag
\end{pmatrix}=\begin{pmatrix}
\cdot & \gamma\sum_i K_iML_i^\dag \\
\gamma\sum_i L_i M^\dag K_i^\dag & \cdot
\end{pmatrix},
\end{equation}
where $\{K_i\}$ and $\{L_i\}$ are two sets of Kraus operators with $\sum_i K_i^\dag K_i=\sum_i L_i^\dag L_i=I_n$, following the fact that quantum channels are completely positive trace-preserving maps.
Under the vectorization picture, it is equivalent that we realize the operation:
\begin{equation}
\vec{M}\rightarrow \left(\sum_i K_i\otimes L_i^* \right)\vec{M}.
\end{equation}
Thus, by adjusting $\{K_i\}$ and $\{L_i\}$, we can encode desired operations into $\sum_i K_i\otimes L_i^*$, which is known as the channel block encoding (CBE) (See the formal definition of CBE in Ref. \cite{shang2024estimating}).

We can consider a special case of NDME and CBE where $M=|\psi\rangle\langle\phi|$ and $L_i=l_i I_n$ (i.e. $\sum_i |l_i|^2=1$). This corresponds to $\rho_M=1/2(|0\rangle|\psi\rangle+|1\rangle|\phi\rangle)(\langle 0|\langle\psi|+\langle 1|\langle \phi|)$ with $\gamma=1/2$ which is transformed by the quantum channel into:
\begin{equation}\label{spcbe}
\frac{1}{2}\begin{pmatrix}
|\psi\rangle\langle\psi| & |\psi\rangle\langle\phi| \\
|\phi\rangle\langle\psi| & |\phi\rangle\langle\phi|
\end{pmatrix}\rightarrow \begin{pmatrix}
\sum_i K_i|\psi\rangle\langle\psi|K_i^\dag & \left(\sum_i l_i K_i|\psi\rangle\right)\langle\phi| \\
|\phi\rangle\left(\langle\psi|\sum_i l_i^* K_i^\dag \right)& |\phi\rangle\langle\phi|
\end{pmatrix}.
\end{equation}
Thus, on the upper-right block, we equivalently realize the operation $\sum_i l_i K_i$ on $|\psi\rangle$.

\subsection{Encoding ODEs into Lindbladians\label{sec.encode.odetolind}}
Lindbladian Eq. \ref{lmee} can be understood as a special ODE under the vectorization picture:
\begin{equation}\label{vlme}
\quad\frac{d\vec{\rho}}{dt}=L(t)\vec{\rho},
\end{equation}
where $\vec{\rho}=\sum_{ij}\rho_{ij}|i\rangle|j\rangle$ is the vector representation of the density matrix $\rho$ and $L$ is the Liouvillian generator for the Lindbladian semi-group which has the following matrix form \cite{albert2014symmetries,shang2024polynomial}:
\begin{eqnarray}\label{liou}
&&L(t)=-i\left(H(t)\otimes I-I\otimes H(t)^T\right)+\sum_i \left(F_i(t)\otimes F_i(t)^*-\frac{1}{2}F_i(t)^\dag F_i(t)\otimes I-I\otimes\frac{1}{2}F_i(t)^TF_i(t)^*\right).
\end{eqnarray}

Following the similar idea shown in Eq. \ref{spcbe}, we now show how to use Lindbladian to encode general linear ODEs. First, we set a $(1+n)$-qubit initial state $\rho_{0}$ which has the form:
\begin{eqnarray}
\rho_{0}=\frac{1}{2}\begin{pmatrix}
 |\mu_0\rangle\langle \mu_0| &  |\mu_0\rangle\langle \mu_0| \\
 |\mu_0\rangle\langle \mu_0| & |\mu_0\rangle\langle \mu_0|
\end{pmatrix}.
\end{eqnarray}
which is a $(1+n,|0\rangle,|1\rangle,\frac{1}{2})$-NDME of $|\mu_0\rangle\langle \mu_0|$. We then consider acting a Lindbladian of form Eq. \ref{lmee} with $H(t)=\begin{pmatrix}
H_1(t) & 0 \\
0 & \alpha I_n
\end{pmatrix}$ and $F_i(t)=\begin{pmatrix}
G_i(t) & 0 \\
0 & \beta_i I_n
\end{pmatrix}$ on $\rho_0$, then under the vectorization picture, the Liouvillian generator is of the form:
\begin{eqnarray}
L(t)&&=-i\begin{pmatrix}
H_1(t) & 0 \\
0 & \alpha I_n
\end{pmatrix}\otimes \begin{pmatrix}
I_n & 0 \\
0 & I_n
\end{pmatrix}+i \begin{pmatrix}
I_n & 0 \\
0 & I_n
\end{pmatrix}\otimes \begin{pmatrix}
H_1(t)^T & 0 \\
0 & \alpha I_n
\end{pmatrix}\nonumber\\&&+\sum_i \begin{pmatrix}
G_i(t) & 0 \\
0 & \beta_i I_n
\end{pmatrix}\otimes \begin{pmatrix}
G_i(t)^* & 0 \\
0 & \beta_i^* I_n
\end{pmatrix}\nonumber\\&&-\sum_i \frac{1}{2} \begin{pmatrix}
G_i(t)^\dag G_i(t) & 0 \\
0 & |\beta_i|^2 I_n
\end{pmatrix}\otimes \begin{pmatrix}
I & 0 \\
0 & I
\end{pmatrix}-\sum_i \frac{1}{2} \begin{pmatrix}
I & 0 \\
0 & I
\end{pmatrix}\otimes\begin{pmatrix}
G_i(t)^T G_i(t)^* & 0 \\
0 & |\beta_i|^2 I_n
\end{pmatrix}.
\end{eqnarray}

Focusing on the upper-right block of $\rho_0$, $\mathcal{L}[\rho_0]$ gives:
\begin{eqnarray}
&&\frac{1}{2}|\mu_0\rangle\langle \mu_0|\rightarrow \frac{1}{2}\left(-i(H_1(t)-\alpha)-\left(\frac{1}{2}\sum_i G_i(t)^\dag G_i(t) -\sum_i\beta_i^*G_i(t)+\frac{1}{2}\sum_i|\beta_i|^2\right)\right)|\mu_0\rangle\langle \mu_0|.
\end{eqnarray}
Here, since the $\alpha$ term can be absorbed into $H_1(t)$ and $\frac{1}{2}\sum_i G_i(t)^\dag G_i(t)$ is positive semi-definite, for semi-dissipative ODEs, we can simply set $\alpha=\beta_i=0$ such that on this upper-right block, we realize the homogenous linear ODE:
\begin{eqnarray}\label{dode}
\frac{d \vec{\mu}(t)}{dt}=-V(t)\vec{\mu}(t)=\left(-i H_1(t)-\frac{1}{2}\sum_i G_i(t)^\dag G_i(t)\right) \vec{\mu}(t),
\end{eqnarray}
with $\vec{\mu}(0)=|\mu_0\rangle$, $A(t)=\frac{V(t)-V^\dagger(t)}{2i}=H_1(t)$ and $B(t)=\frac{V(t)+V^\dagger(t)}{2}=\frac{1}{2}\sum_i G_i(t)^\dag G_i(t)$. 

Finally, at time $T$, we will have:
\begin{eqnarray}\label{rhot}
\rho_{T}=\frac{1}{2}\begin{pmatrix}
\sigma_T & \eta_T|\mu_T\rangle\langle \mu_0| \\
\eta_T|\mu_0\rangle\langle \mu_T| & |\mu_0\rangle\langle \mu_0|
\end{pmatrix}.
\end{eqnarray}
where $|\mu_T\rangle=\vec{\mu}(T)/\eta_T$ is the normalized solution with $\eta_T=\|\vec{\mu}(T)\|$ and $\sigma_T$ is the output density matrix from the initial state $|\mu_0\rangle\langle \mu_0|$ under the evolution of Lindbladian with $H_1(t)$ as the internal Hamiltonian and $G_i(t)$ as jump operators after a time $T$. 


\subsection{Measuring properties \label{sec:mp}}
In this subsection, without preparing $|\mu_T\rangle$, we consider how to measure properties of $\vec{\mu}(T)$ such as $\eta_T\langle \phi_0|\mu_T\rangle$ known as the Loschmidt echo \cite{goussev2012loschmidt} when $|\phi_0\rangle=|\mu_0\rangle$ and the expectation value $\eta_T^2\langle \mu_T|O|\mu_T\rangle$ with respect to an observable $O$. 

For $\eta_T\langle \phi_0|\mu_T\rangle$, we can follow the procedure in subsection \ref{sec.encode.odetolind} except for changing the initial state $\rho_0$ from $|+\rangle|\mu_0\rangle$ to $1/\sqrt{2}(|0\rangle|\mu_0\rangle+|1\rangle|\phi_0\rangle)$. Then, at time $T$, we will have $\rho_{T1}$ with the form:
\begin{eqnarray}\label{rhot1}
\rho_{T1}=\frac{1}{2}\begin{pmatrix}
\sigma_T & \eta_T|\mu_T\rangle\langle \phi_0| \\
\eta_T|\phi_0\rangle\langle \mu_T| & |\phi_0\rangle\langle \phi_0|
\end{pmatrix}.
\end{eqnarray}
From $\rho_{T1}$, we notice that:
\begin{eqnarray}
&&\text{Tr}((X\otimes I_n) \rho_{T1})=\frac{1}{2} \eta_T (\langle\phi_0|\mu_T\rangle+\langle\mu_T|\phi_0\rangle)=\eta_T \text{Re}[\langle \phi_0|\mu_T\rangle],\nonumber\\&&\text{Tr}((Y\otimes I_n)  \rho_{T1})=\frac{1}{2} \eta_T (-i\langle\phi_0|\mu_T\rangle+i\langle\mu_T|\phi_0\rangle)=\eta_T \text{Im}[\langle \phi_0|\mu_T\rangle]. 
\end{eqnarray}
If we define $|S_{T1}\rangle$ as the purification state of $\rho_{T1}$, then we have:
\begin{eqnarray}
\langle S_{T1}|I_e\otimes X\otimes I_n |S_{T1}\rangle&&=\eta_T \text{Re}[\langle \phi_0|\mu_T\rangle],\nonumber\\\langle S_{T1}|I_e\otimes Y\otimes I_n |S_{T1}\rangle&&=\eta_T \text{Im}[\langle \phi_0|\mu_T\rangle],
\end{eqnarray}
with $e$ denotes the environment system of $|S_{T1}\rangle$. Note that the $|S_{T1}\rangle$ can be directly obtained from the quantum algorithm of simulating Lindbladians \cite{li2023opensystem,he2024efficientoptimalcontrolopen}. Since we have now turned the values of interest into amplitudes, based on Lemma \ref{ae}, we can now use $\mathcal{O}(\eta_T^{-1}\varepsilon^{-1})$ queries on the preparation oracle of $|S_{T1}\rangle$ to give a $\eta_T\varepsilon$-additive estimation of $\eta_T\langle \phi_0|\mu_T\rangle$.

For $\eta_T^2\langle \mu_T|O|\mu_T\rangle$, we need to prepare another density matrix $\rho_{T2}$. Starting from $\rho_{T}$, we can further act a Lindbladian on it with $H'(t)=\begin{pmatrix}
0 & 0 \\
0 & H_1(t)
\end{pmatrix}$ and $F_i'(t)=\begin{pmatrix}
0 & 0 \\
0 & G_i(t)
\end{pmatrix}$. Then, following the similar calculation as in Section~\ref{sec.encode.odetolind}, after another time $T$, we will have:
\begin{eqnarray}\label{rhot2}
\rho_{T2}=\frac{1}{2}\begin{pmatrix}
\sigma_T &  \eta_T^2|\mu_T\rangle\langle \mu_T| \\
\eta_T^2|\mu_T\rangle\langle \mu_T| & \sigma_T. 
\end{pmatrix}.
\end{eqnarray}
From $\rho_{T2}$, we can observe that:
\begin{eqnarray}
\text{Tr}((X\otimes O)\rho_{T2})=\eta_T^2\langle \mu_T|O|\mu_T\rangle
\end{eqnarray}
For simplicity, we assume $O$ is a Pauli operator. Then, by replacing $\rho_{T2}$ to its purification state $|S_{T2}\rangle$, the value of interest is encoded as an amplitude. Based on Lemma \ref{ae}, if we want to have a $\eta_T^2\varepsilon$-additive estimation of $\eta_T^2\langle \mu_T|O|\mu_T\rangle$, we need $\mathcal{O}(\eta_T^{-2}\varepsilon^{-1})$ queries on the preparation oracle of $|S_{T2}\rangle$.

\subsection{Extracting $|\mu_T\rangle$ \label{sec: extract}}
In this subsection, we show how to prepare $|\mu_T\rangle$ from $\rho_T$. We have two oracles. First, we have the initial state preparation oracle: $U_I|0\rangle_n=|\mu_0\rangle$, then we have the Lindbladian simulation oracle $U_L$: $U_L(H\otimes U_I\otimes I_e)|0\rangle|0\rangle_n|0\rangle_e=|S_T\rangle$ which is the purification of $\rho_T$. The construction of the Lindbladian in subsection \ref{sec.encode.odetolind} makes $|S_T\rangle$ have the form:
\begin{eqnarray}
|S_T\rangle=\frac{1}{\sqrt{2}}(|0\rangle |S_{T0}\rangle_{n+e}+|1\rangle |\mu_0\rangle|E_{T1}\rangle_e)
\end{eqnarray}
where $|E_{T1}\rangle$ is an environment state. Compared with $\rho_T$, we must have:
\begin{eqnarray}
\text{Tr}_e(|S_{T0}\rangle_{n+e} \langle \mu_0|\langle E_{T1}|_e)=\eta_T |\mu_T\rangle\langle\mu_0|.
\end{eqnarray}
Thus, we know:
\begin{eqnarray}
( I_n\otimes \langle E_{T1}|_e)|S_{T0}\rangle=\eta_T |\mu_T\rangle.
\end{eqnarray}
We can re-express $|S_T\rangle$ into:
\begin{eqnarray}
|S_T\rangle=\frac{\eta_T}{\sqrt{2}}|0\rangle|\mu_T\rangle|E_{T1}\rangle+\sqrt{1-\eta_T^2/2}|junk\rangle_{1+n+e},
\end{eqnarray}
satisfying $(\langle 0|\otimes I_n\otimes \langle E_{T1}|)|junk\rangle_{1+n+e}=0$.

Since $U_L$ satisfies:
\begin{eqnarray}
U_L|1\rangle|\psi\rangle_n|0\rangle_e=|1\rangle|\psi\rangle_n|E_{T1}\rangle_e,
\end{eqnarray}
for any $|\psi\rangle_n$, we can build a reflection operator $U_1$:
\begin{eqnarray}
U_1&&=(X\otimes I_{n+e})U_L(I_{1+n+e}-2X|0\rangle\langle 0|X\otimes I_n\otimes |0\rangle\langle 0|_e)U_L^\dag (X\otimes I_{n+e})\nonumber\\=&&I_{1+n+e}-2|0\rangle\langle 0|\otimes I_n\otimes |E_{T1}\rangle\langle E_{T1}|_e.
\end{eqnarray}
We can also build another reflection operator $U_2$:
\begin{eqnarray}
U_2&&=U_L(H\otimes U_I\otimes I_e)(I_{1+n+e}-2|0\rangle\langle 0|_{1+n+e})(H\otimes U_I^\dag\otimes I_e)U_L^\dag.
\end{eqnarray}
The composite operator $U_2U_1$ which uses $2$ calls on $U_L$ and $U_L^\dag$ and uses a single call on $U_I$ and $U_I^\dag$, then serves as the Grover operator used in amplitude amplification. Based on Lemma \ref{aa}, we have the following lemma:
\begin{lemma}[Target extraction\label{lemma.state.extract}]
$|\mu_T\rangle$ can be prepared using $\mathcal{O}(\eta_T^{-1})$ queries on the Lindbladian simulation oracle $U_L$ and the initial state preparation oracle $U_I$.
\end{lemma}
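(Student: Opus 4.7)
The plan is to cast target extraction into the amplitude amplification framework of Lemma \ref{aa}, identifying $|0\rangle|\mu_T\rangle|E_{T1}\rangle$ as the good component of $|S_T\rangle$ with amplitude $\eta_T/\sqrt{2}$, and then supplying two reflections that Lemma \ref{aa} requires. First I would establish the decomposition
\begin{equation}
|S_T\rangle = \frac{\eta_T}{\sqrt{2}}\,|0\rangle|\mu_T\rangle|E_{T1}\rangle + \sqrt{1-\eta_T^2/2}\,|\mathrm{junk}\rangle_{1+n+e},
\end{equation}
with $(\langle 0|\otimes I_n\otimes \langle E_{T1}|)|\mathrm{junk}\rangle = 0$. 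This is forced by the form of $\rho_T$ in Eq.~\eqref{rhot}: tracing out the environment from the $|0\rangle\langle 1|$ cross term of $|S_T\rangle\langle S_T|$ must reproduce the upper-right block $\tfrac{\eta_T}{2}|\mu_T\rangle\langle\mu_0|$, which pins down $(I_n\otimes\langle E_{T1}|)|S_{T0}\rangle = \eta_T|\mu_T\rangle$.

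The main obstacle is implementing a reflection about the good subspace $|0\rangle\otimes I_n\otimes |E_{T1}\rangle$ without an explicit preparation oracle for $|E_{T1}\rangle$, since this is a data-dependent environment state produced by the Lindbladian evolution. The key observation I would use is the block structure of the Lindbladian generator designed in Section~\ref{sec.encode.odetolind}: because $H(t)$ and the $F_i(t)$ were chosen to vanish in the $|1\rangle\langle 1|$ ancilla sector (setting $\alpha=\beta_i=0$), the full simulation unitary $U_L$ acts trivially on the system when the ancilla is $|1\rangle$, i.e.\ $U_L|1\rangle|\psi\rangle_n|0\rangle_e = |1\rangle|\psi\rangle_n|E_{T1}\rangle_e$ for \emph{every} $|\psi\rangle$. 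Flipping the ancilla with an $X$, sandwiching a $|0\rangle_e$-reflection between $U_L$ and $U_L^\dagger$, and flipping back then realizes exactly $I - 2\,|0\rangle\langle 0|\otimes I_n\otimes |E_{T1}\rangle\langle E_{T1}|$, using only two queries to $U_L$ and no access to $|E_{T1}\rangle$ itself.

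The second reflection, about the preparation state $U_L(H\otimes U_I\otimes I_e)|0\rangle|0\rangle|0\rangle_e$, is the standard dressing $U_m(I - 2|0\rangle\langle 0|)U_m^\dagger$ and consumes one query to each of $U_L$ and $U_I$. Composing these into the Grover operator and iterating it $\mathcal{O}(\eta_T^{-1})$ times, as prescribed by Lemma~\ref{aa}, boosts the good-component amplitude from $\eta_T/\sqrt{2}$ to $\Theta(1)$; measuring the ancilla and environment in the $|0\rangle, |E_{T1}\rangle$ directions (again implementable using $U_L$ as above) then yields $|\mu_T\rangle$ on the system register with constant success probability. Counting queries per Grover iterate gives the stated $\mathcal{O}(\eta_T^{-1})$ scaling in both $U_L$ and $U_I$; the only non-routine point to check is the crucial fact that $|E_{T1}\rangle$ does not depend on the system input, which is exactly what the block-diagonal vanishing of the generator ensures.
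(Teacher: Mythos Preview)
Your proposal is correct and follows essentially the same route as the paper: you identify the good component $|0\rangle|\mu_T\rangle|E_{T1}\rangle$ with amplitude $\eta_T/\sqrt{2}$, build the reflection about the good subspace by exploiting that $U_L$ acts on the $|1\rangle$ ancilla sector as $|1\rangle|\psi\rangle|0\rangle_e\mapsto |1\rangle|\psi\rangle|E_{T1}\rangle_e$ independently of $|\psi\rangle$, build the reflection about the start state in the standard way, and invoke Lemma~\ref{aa}. This matches the paper's construction of $U_1$ and $U_2$ almost verbatim; your added remark that the $\psi$-independence of $|E_{T1}\rangle$ is the ``crucial fact'' traceable to the vanishing of the generator on the $|1\rangle\langle 1|$ block is a point the paper simply asserts.
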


\subsection{Algorithm and complexity\label{sec: algcom}}
\subsubsection{Time-independent case}
In this work, we consider two different input models: the first one called the direct access model is that we are directly given the block encoding of matrix $V$, and the other one called the square root access model is that we are given $1$-block encoding $U_{H_1,G_i}$ of $\alpha_V^{-1}|0\rangle\langle 0|\otimes H_1+\sum_i \alpha_V^{-1/2}|i\rangle\langle i|\otimes G_i$ with $H_1=\frac{V-V^\dagger}{2i}$, $\sum_{i=1}^D G_i^\dagger G_i=\frac{V+V^\dagger}{2}$, where $D$ is a constant. 

We start with the first input model.
For the Lindbladian simulation, we need access to matrix $G_i$, which can be chosen to be the square root matrix $\sqrt{\frac{V+V^{\dagger}}{2}}$ (here $D = 1$). 
Note that in this work we mainly consider the semi-dissipative case, which implies that $\frac{V+V^{\dagger}}{2}$ is positive semidefinite and $G$ can be uniquely defined.
We show how to construct the block encoding of $G$ as follows.

\begin{lemma}
    Assume we are given access to an $(\alpha,a,0)$-block encoding $U_B$ of a positive semi-definite matrix $B\in \mathbb{C}^{2^n\times 2^n}$.
    Suppose that the smallest non-zero eigenvalue of $G$ is lower bounded by $\Delta>0$.
    One can construct a $(2\sqrt{\alpha},a+2,\epsilon)$-block encoding of matrix $G$ such that $B=G^\dagger G$, by using $\mathcal{O}(\frac{\alpha}{\Delta}\log(\frac{\alpha}{\epsilon}))$ times of controlled-$U_B$ and controlled-$U_B^\dagger$.
\end{lemma}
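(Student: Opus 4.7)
Since $B$ is positive semidefinite, the square root $G := B^{1/2}$ is well defined, Hermitian, and satisfies $B = G^\dagger G$. The plan is to realize a block encoding of $G$ by applying the quantum eigenvalue transformation (the Hermitian variant of QSVT) with an odd polynomial approximating $\tfrac{1}{2}\sqrt{x}$ to the $(1,a,0)$-block encoding of $B/\alpha$ obtained from $U_B$. The factor $2\sqrt{\alpha}$ in the claim then reflects the rescaling $\sqrt{B} = \sqrt{\alpha}\cdot\sqrt{B/\alpha}$, and the polynomial degree directly governs the query count.

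First I would invoke Lemma \ref{poly.sqrt} to construct an odd real polynomial $P$ of degree $\ell = \mathcal{O}\bigl(\tfrac{\alpha}{\Delta}\log(\tfrac{\alpha}{\epsilon})\bigr)$ satisfying $|P(x) - \tfrac{1}{2}\sqrt{x}| \le \epsilon/(4\sqrt{\alpha})$ on the interval $[\Delta/\alpha,1]$ that covers the nonzero spectrum of $B/\alpha$ under the hypothesis, together with $|P(x)| \le 1$ on $[-1,1]$. The oddness of $P$ forces $P(0)=0$, which guarantees that the approximation also behaves correctly on the kernel of $B$, so that $P(B/\alpha)$ faithfully approximates $\tfrac{1}{2\sqrt{\alpha}}\,G$ on the entire spectrum.

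Next I would apply Lemma \ref{theorem.qsvt} (in its Hermitian, minimal-ancilla form) to the block encoding of $B/\alpha$ with the phase sequence associated to $P$. This produces a block encoding of $P(B/\alpha)$ with a tunable intrinsic error $\xi$, using $\mathcal{O}(\ell)$ queries to controlled-$U_B$ and controlled-$U_B^\dagger$, which matches the claimed query complexity. Since $P(B/\alpha) \approx \tfrac{1}{2\sqrt{\alpha}}\,G$, absorbing the scalar $2\sqrt{\alpha}$ into the normalization constant of the block encoding yields the advertised $(2\sqrt{\alpha},\cdot,\cdot)$-block encoding. The triangle-inequality bound $\|2\sqrt{\alpha}\,P(B/\alpha) - G\| \le 2\sqrt{\alpha}\,\|P(B/\alpha) - \tfrac{1}{2\sqrt{\alpha}}G\| + 2\sqrt{\alpha}\,\xi$ shows that calibrating both the pointwise polynomial error and $\xi$ to $\Theta(\epsilon/\sqrt{\alpha})$ gives total block-encoding error at most $\epsilon$; this calibration is precisely what inflates $\log(1/\epsilon)$ into $\log(\alpha/\epsilon)$ in the final degree.

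The main technical obstacle will be matching the claimed ancilla overhead of exactly $a+2$, rather than the $a+n+4$ that a direct appeal to the general form of Lemma \ref{theorem.qsvt} produces. This is handled by invoking the Hermitian eigenvalue-transformation variant of QSVT, which for a Hermitian input block and an odd polynomial needs only one rotation-control qubit and one flag qubit on top of the original $a$ block-encoding ancillas. A secondary bookkeeping point is the error-scaling argument above: one must take care that the $2\sqrt{\alpha}$ normalization does not inflate the polynomial error, which is why the target accuracy is chosen as $\Theta(\epsilon/\sqrt{\alpha})$ rather than $\Theta(\epsilon)$.
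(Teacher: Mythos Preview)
Your proposal is correct and follows the same route as the paper: invoke Lemma~\ref{poly.sqrt} for an odd polynomial approximant to $\tfrac12\sqrt{x}$ and then apply Lemma~\ref{theorem.qsvt} to the block encoding of $B/\alpha$, reading off the $2\sqrt{\alpha}$ normalization from $\sqrt{B}=\sqrt{\alpha}\sqrt{B/\alpha}$. You actually supply more detail than the paper does---the paper's proof is two sentences and does not spell out the error rescaling or justify the $a+2$ ancilla count against the $a+n+4$ of Lemma~\ref{theorem.qsvt}; your discussion of both points (calibrating the polynomial accuracy to $\Theta(\epsilon/\sqrt{\alpha})$ and appealing to the Hermitian/odd-polynomial variant of QSVT to save ancillas) is a welcome addition rather than a divergence.
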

\begin{proof}
    As $B$ is a positive semi-definite matrix, we have $B=\sum_i \lambda_i |\psi_i\rangle \langle \psi_i|$, where $\lambda_i\geq 0$ for $i\in [2^n]$ and $\{\ket{\psi_i}\}_{i=0}^{2^n-1}$ is the basis.
    Therefore, we have $G=\sum_i \sqrt{\lambda_i}|\psi_i\rangle \langle \psi_i|$, which is uniquely determined.
    This provides the intuition that if we can implement the square root function on $\lambda_i$'s, we can achieve the matrix $G$.

    By Lemma~\ref{poly.sqrt}, there is a real odd polynomial that can approximate the function $\frac{1}{2}\sqrt{x}$.
    Combining this with Lemma~\ref{theorem.qsvt}, one can construct a $(2\sqrt{\alpha},a+2,\epsilon)$-block encoding of $G$ by using $\mathcal{O}(\frac{\alpha}{\Delta}\log(\frac{\alpha}{\epsilon}))$ times of controlled-$U_G$ and controlled-$U_G^\dagger$, and with $\mathcal{O}(\mathrm{polylog}(\frac{\alpha}{\epsilon}))$ classical computation time.
\end{proof}

An additional point to notice is that for the main algorithm considered in this work, we do not only need to implement the square root function onto $G$ with the polynomial $P$, but we also require that for $x=0$, $P(0)=0$, i.e., the implemented transformation will not add an energy shift. 
If there is a energy shift, i.e., $P(0)\neq 0$, then $e^{-P(0)t}$ will decrease exponentially with $t$.
This can be achieved using the odd polynomial, which satisfies $P(0)=0$.

Before we proceed to the complexity analysis, we first show the following linear algebra result, which is useful in bounding the errors of our algorithm. 

\begin{lemma}[Norm propagation\label{lemma.norm.error}]
    Let $\vec{\psi}$, $\vec{\phi}$ and $\vec{\mu}$ be vectors. Then
    \begin{align}
        \|\vec{\psi}-\vec{\phi}\|_2\leq \|(\vec{\psi}-\vec{\phi})\vec{\mu}^{\dagger}\|_1/\|\vec{\mu}\|_2.
    \end{align}
\end{lemma}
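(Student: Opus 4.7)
The plan is to recognize that the right-hand side involves the trace norm of a rank-one outer product, which has a simple closed-form evaluation. Setting $\vec{v}:=\vec{\psi}-\vec{\phi}$, the object inside the trace norm is $M:=\vec{v}\vec{\mu}^{\dagger}$, a rank-one matrix. So the core observation to leverage is the identity $\|\vec{v}\vec{\mu}^{\dagger}\|_1=\|\vec{v}\|_2\,\|\vec{\mu}\|_2$, which will turn the claimed inequality into an equality.

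Concretely, I would first compute $MM^{\dagger}=\vec{v}\vec{\mu}^{\dagger}\vec{\mu}\vec{v}^{\dagger}=\|\vec{\mu}\|_2^{2}\,\vec{v}\vec{v}^{\dagger}$. Since $\vec{v}\vec{v}^{\dagger}$ is a rank-one positive semidefinite matrix with unique nonzero eigenvalue $\|\vec{v}\|_2^{2}$, the matrix $MM^{\dagger}$ has a single nonzero eigenvalue $\|\vec{\mu}\|_2^{2}\|\vec{v}\|_2^{2}$. Hence the only nonzero singular value of $M$ is $\|\vec{v}\|_2\|\vec{\mu}\|_2$, and by definition the trace norm $\|M\|_1$ (the sum of the singular values) equals $\|\vec{v}\|_2\|\vec{\mu}\|_2$.

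Rearranging this identity, $\|\vec{v}\|_2 = \|M\|_1/\|\vec{\mu}\|_2$, which is exactly the stated bound (and in fact with equality, provided $\vec{\mu}\neq \vec{0}$; the case $\vec{\mu}=\vec{0}$ makes the bound vacuous). There is no real obstacle here: the only subtlety is to avoid treating the trace norm abstractly and instead exploit the rank-one structure. An alternative route, should one prefer a variational argument, is to use the duality $\|M\|_1=\sup_{\|X\|\le 1}|\mathrm{Tr}(X^{\dagger}M)|$ and choose $X=\vec{v}\vec{\mu}^{\dagger}/(\|\vec{v}\|_2\|\vec{\mu}\|_2)$, but this is strictly longer than the direct singular value computation above.
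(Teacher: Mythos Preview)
Your proof is correct and in fact establishes equality, which is stronger than what the lemma claims. The paper's proof takes a different route: it starts from the algebraic identity $(\vec{v}\vec{\mu}^{\dagger})\vec{\mu}=\|\vec{\mu}\|_2^{2}\,\vec{v}$, divides through, and then bounds $\|(\vec{v}\vec{\mu}^{\dagger})\vec{\mu}\|_2\le \|\vec{v}\vec{\mu}^{\dagger}\|\cdot\|\vec{\mu}\|_2\le \|\vec{v}\vec{\mu}^{\dagger}\|_1\cdot\|\vec{\mu}\|_2$ using the definition of the spectral norm followed by the general inequality $\|\cdot\|\le\|\cdot\|_1$. Your approach exploits the rank-one structure directly via the singular value decomposition, which is shorter and reveals that the bound is actually tight; the paper's chain-of-inequalities argument obscures this tightness but avoids any explicit spectral computation. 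Either way the content is the same once one notices that for a rank-one matrix the spectral and trace norms coincide.
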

\begin{proof}
We have 
    \begin{align}
    \|\vec{\psi}-\vec{\phi}\|_2&=\|((\vec{\psi}-\vec{\phi})\vec{\mu}^\dagger) \vec{\mu}\|_2/ \|\vec{\mu}\|_2^2\\
    &\leq \|(\vec{\psi}-\vec{\phi})\vec{\mu}^\dagger\|/ \|\vec{\mu}\|_2\\
    &\leq \|(\vec{\psi}-\vec{\phi})\vec{\mu}^\dagger\|_1/ \|\vec{\mu}\|_2,    
    \end{align}
    where the first inequality is from the definition of the spectral norm, and the last inequality is that spectral norm is smaller than the trace norm.
\end{proof}

Now we show how to solve the linear differential equation via the Lindbladian in the first input model: direct access model.

\begin{theorem}[Quantum ODE solver via Lindbladian, direct access model\label{alg.first}]
    Consider a time-independent, semi-dissipative, and homogeneous linear ordinary differential equation $\frac{d}{dt}\vec{\mu}(t)=-V\vec{\mu}(t)$ with $\vec{\mu}(0)=\ket{\mu_0}$, where $V\in \mathbb{C}^{2^n\times 2^n}$.
    Assume we are given access to $(\alpha_V,a_V,0)$-block encoding of $V$, and quantum state preparation unitary $U_{\mu_0}:\ket{0}_n\rightarrow |\mu_0\rangle$.
    Suppose for the Hermitian part of matrix $V$, the smallest non-zero eigenvalue is lower bounded by $\Delta>0$, then there exists a quantum algorithm that outputs an $\epsilon$-approximation of  $\ket{\mu_T}$ by using $\mathcal{O}\left(\Delta^{-1}\eta_T^{-1}\alpha_V^2 T\frac{\log^2(\alpha_V T/(\epsilon\eta_T))}{\log\log(\alpha_VT/(\epsilon\eta_T))}\right)$ times of controlled-$U_V$ and controlled-$U_V^\dagger$, and $\mathcal{O}(\eta_T^{-1})$ times of $U_{\mu_0}$, where $\eta_T$ is the normalization factor of $\vec{\mu}(T)$. Additionally, the algorithm also uses $\mathcal{O}\left(\eta_T^{-1} \left(\Delta^{-1}\alpha_V^2 T\text{poly}\log(\eta_T^{-1}\alpha_V T\epsilon^{-1})+\alpha_{V} T \text{poly}\log(\eta_T^{-1}\alpha_VT\epsilon^{-1})\right)\right)$ additional 1- and 2-qubit gates, and $\mathcal{O}(\text{poly}\log(\eta_T^{-1}\alpha_VT\epsilon^{-1}))$ number of additional ancilla qubits besides those used for block encodings.

\end{theorem}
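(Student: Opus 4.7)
The plan is to instantiate the framework of Section~\ref{sec.encode.odetolind} with $H_1 = (V-V^\dagger)/(2i)$ and a single jump operator $G$ satisfying $G^\dagger G = V+V^\dagger$, feed these into the time-independent Lindbladian simulation of Lemma~\ref{thm.opensystem.independent} to obtain a purification of (an approximation to) $\rho_T$, and finally extract $|\mu_T\rangle$ via amplitude amplification as in Lemma~\ref{lemma.state.extract}.

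First I would construct block encodings of $H_1$ and $G$ from $U_V$. A one-ancilla LCU combining $U_V$ and $U_V^\dagger$ yields $(\alpha_V, a_V+1, 0)$-block encodings of both $H_1$ and $B := (V+V^\dagger)/2$ at cost $\mathcal{O}(1)$ queries. To obtain $G$, I would apply QSVT (Lemma~\ref{theorem.qsvt}) with the odd square-root polynomial of Lemma~\ref{poly.sqrt} to the block encoding of $B/\alpha_V$; this yields an $(\mathcal{O}(\sqrt{\alpha_V}), \mathcal{O}(a_V), \epsilon_G)$-block encoding of $G$ using $\mathcal{O}((\alpha_V/\Delta)\log(1/\epsilon_G))$ queries to $U_V$. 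Oddness of the polynomial is crucial: it forces $P(0)=0$, preserving the null space of $B$ and avoiding an unphysical energy shift on vectors outside the support of $B$. With these inputs the Lindbladian has subnormalization $\alpha_\mathcal{L} = \alpha_{H_1} + \tfrac{1}{2}\alpha_G^2 = \mathcal{O}(\alpha_V)$, so Lemma~\ref{thm.opensystem.independent} produces a purification $|\tilde S_T\rangle$ of an $\epsilon_L$-trace-norm approximation of $\rho_T$ using $\widetilde{\mathcal{O}}(\alpha_V T\log(\alpha_V T/\epsilon_L)/\log\log(\alpha_V T/\epsilon_L))$ queries to the block encodings of $H_1$ and $G$. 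Finally, Lemma~\ref{lemma.state.extract} extracts $|\mu_T\rangle$ with $\mathcal{O}(\eta_T^{-1})$ additional calls to the simulation oracle and to $U_{\mu_0}$.

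The main obstacle is the error accounting, since the simulation accuracy is measured in trace norm on $\rho_T$ while the target is a pure-state norm error on $|\mu_T\rangle$, and the amplitude-amplification factor $\eta_T^{-1}$ can amplify small post-selection errors. Applying Lemma~\ref{lemma.norm.error} to the upper-right block, which equals $\tfrac{1}{2}\eta_T|\mu_T\rangle\langle\mu_0|$ in the ideal case, converts an $\epsilon_L$-trace-norm error on $\rho_T$ into an $\mathcal{O}(\epsilon_L/\eta_T)$ error on $\vec{\mu}(T)$. Setting $\epsilon_L = \Theta(\epsilon\eta_T)$ and $\epsilon_G = \Theta(\epsilon_L/(\alpha_V T))$ ensures that block-encoding errors do not accumulate beyond $\epsilon$ after $\widetilde{\mathcal{O}}(\alpha_V T)$ simulation queries, and combining with the $\mathcal{O}(\eta_T^{-1})$ amplification overhead yields the advertised $\widetilde{\mathcal{O}}(\Delta^{-1}\eta_T^{-1}\alpha_V^2 T\,\log^2(\alpha_V T/(\epsilon\eta_T))/\log\log(\alpha_V T/(\epsilon\eta_T)))$ queries to $U_V$ and $\mathcal{O}(\eta_T^{-1})$ queries to $U_{\mu_0}$. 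Additional gate and ancilla counts are inherited directly from Lemma~\ref{thm.opensystem.independent}, multiplied by the amplification factor $\eta_T^{-1}$ and the QSVT overhead for constructing $U_G$.
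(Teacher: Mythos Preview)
Your proposal is correct and follows essentially the same approach as the paper: LCU to obtain block encodings of $A$ and $B$, QSVT with the odd square-root polynomial of Lemma~\ref{poly.sqrt} (so that $P(0)=0$) to build $G$, Lindbladian simulation via Lemma~\ref{thm.opensystem.independent} with $\alpha_{\mathcal{L}}=\mathcal{O}(\alpha_V)$, error conversion through Lemma~\ref{lemma.norm.error} applied to the off-diagonal block, and extraction via Lemma~\ref{lemma.state.extract}. Your parameter choices $\epsilon_L=\Theta(\epsilon\eta_T)$ and $\epsilon_G=\Theta(\epsilon_L/(\alpha_V T))$ match the paper's (up to harmless factors inside logarithms), and the resulting query, gate, and ancilla counts are recovered exactly as in the paper.
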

\begin{proof}
    Given access to $U_V$, one can easily construct unitaries $U_A$ and $U_B$ which are $(\alpha_V,a_V+1,0)$-block encodings of $A=\frac{V-V^\dagger}{2i}$ and $B=\frac{V+V^\dagger}{2}$ respectively, by taking the linear combination using single time of controlled-$U_V$ and controlled-$U^\dagger_V$.

    Then we follow the step described in Section.~\ref{sec.encode.odetolind} to encode the linear equation into a Linbladian.
    We choose $\rho_0=\frac{1}{2}(\ket{0}|\mu_0\rangle+\ket{1}|\mu_0\rangle)(\bra{0}\langle \mu_0|+\bra{1}\langle \mu_0|)$, i.e., 
    \begin{align}
        \rho_0=\frac{1}{2}
        \begin{pmatrix}
            |\mu_0\rangle\langle \mu_0| & |\mu_0\rangle\langle \mu_0|\\
            |\mu_0\rangle\langle \mu_0| & |\mu_0\rangle\langle \mu_0|
        \end{pmatrix}
    \end{align}
    as the initial state.
    For the internal Hamiltonian $H$ and jump operator $F$, we set
    \begin{align}\label{eq.jumpop}
        H=\begin{pmatrix}
            A&0\\ 0&0
        \end{pmatrix}, 
        F=\begin{pmatrix}
            \sqrt{B} & 0\\ 0&0
        \end{pmatrix}.
    \end{align}
    Note that since we consider a semi-dissipative ODE, $B$ is semi-definite, and $\sqrt{B}$ is uniquely defined.
    To construct the block encodings of $H$ and $F$, notice that 
    \begin{align}
        H=(|0\rangle\langle 0|\otimes I)(I\otimes A),
    \end{align}
    and it is the same for $F$. 
    One can easily construct an $(1,1,0)$-block encoding of $|0\rangle\langle 0|$ by taking the linear combination of identity and Pauli $Z$ operator.
    By taking the product formula as Lemma.~\ref{product.blockencoding}, one can construct an $(\alpha_V,a_V+2,0)$-block encoding of $U_H$.
    It follows the same procedure for block encoding of $F$, except that we need to implement a square root function first.
    Based on the assumption, we can use Lemma~\ref{poly.sqrt} to construct an $(2\sqrt{\alpha_V}, a_V+5, \epsilon')$-block encoding of $G$ by using $\mathcal{O}(\frac{\alpha_V}{\Delta}\log(\frac{\alpha_V}{\epsilon'}))$ times of $U_B$, where $\epsilon'=\mathcal{O}(\epsilon/(\eta_T T))$.
    Then we can further construct a $(2\sqrt{\alpha_V}, a_V+6, \epsilon')$-block encoding $U_F$ of $F$. 
    
    Now, we implement the open system simulation based on Lemma~\ref{thm.opensystem.independent} to construct a purification of $\rho_T$, i.e, 
        \begin{align}
        \rho_T=\frac{1}{2} 
        \begin{pmatrix}
            \sigma_T & \eta_T |\mu_T\rangle\langle \mu_0|\\
            \eta_T |\mu_0\rangle\langle\mu_T| & |\mu_0\rangle\langle \mu_0|
        \end{pmatrix},
    \end{align}
    where $\sigma_T$ is the output density matrix under the evolution of Lindbladian from initial state $|\mu_0\rangle\langle \mu_0|$, and $\eta_T$ is the normalization factor of $|\mu_T\rangle$,
    by using $\mathcal{O}\left(\alpha_VT\frac{\log(\alpha_V\eta_T T/\epsilon)}{\log\log(\alpha_V\eta_TT/\epsilon)}\right)$ times of $U_H$ and $U_F$, where $\|\mathcal{L}\|_{\mathrm{be}}=\alpha_V+2\alpha_V=\mathcal{O}(\alpha_V)$.
    To achieve this, we use $\mathcal{O}\left(\frac{1}{\Delta}\alpha_VT\frac{\log(\alpha_V\eta_T T/\epsilon)}{\log\log(\alpha_V\eta_TT/\epsilon)}\right)$
    times of controlled-$U_H$ and controlled-$U_F^\dagger$, where $\epsilon$ is the error in the trace norm distance.
    
    Now, we output $\epsilon\eta_T$-approximation $\tilde{\rho}_T$ of purified state, i.e., $\|\tilde{\rho}_T-\rho_T\|_1\leq \epsilon\eta_T$.
    Let $\ket{S_T}$ be the purified quantum state of $\rho_T$, i.e., $\mathrm{Tr}_{\mathrm{e}}[\ket{S_T}\bra{S_T}]=\rho_T$, and let $\ket{\tilde{S}_T}$ be the state of our output, i.e., $\mathrm{Tr}_{\mathrm{an}}[\ket{\tilde{S}_T}\bra{\tilde{S}_T}]=\tilde{\rho}_T$.
    By definition, $\|\rho_T\|_1=\mathrm{Tr}[\ket{S_T}\bra{S_T}]=\|\ket{S_T}\|_2^2$ as density matrix is positive semi-definite.
    We have $\epsilon\eta_T\geq \|\rho_T-\tilde{\rho}_T\|_1=\|\mathrm{Tr}_{e}[\ket{S_T}\bra{S_T}-\ket{\tilde{S}_T}\bra{\tilde{S}_T}]\|_1$.
    If we only focus on the top-right block the density matrix, which corresponds to $|\mu_T\rangle\langle \mu_0|$, we have $(\bra{0}\otimes I_n)\rho_T(\ket{1}\otimes I_n)=\eta_T|\mu_T\rangle\langle \mu_0|$.
    We have 
    \begin{align}
        \|\eta_T|\mu_T\rangle\langle \mu_0|-\tilde{\eta}_T\ket{\tilde{\mu}(T)}\langle \mu_0|\|_1=&\|\mathrm{Tr}_{e}[(\bra{0}\otimes I_n)(\ket{S_T}\bra{S_T}-\ket{\tilde{S}_T}\bra{\tilde{S}_T})(\ket{1}\otimes I_n)]\|_1\\
        \leq&\|\mathrm{Tr}_{e}[|S_T\rangle \langle S_T|-\ket{\tilde{S}_T}\bra{\tilde{S}_T}]\|_1\leq \epsilon\eta_T.
    \end{align}
    Further, by Lemma~\ref{lemma.norm.error}, we have $\|\eta_T\ket{\mu_T}-\tilde{\eta}_T\ket{\tilde{\mu}_T}\|_2 \leq \|\eta_T|\mu_T\rangle\langle \mu_0|-\tilde{\eta}_T\ket{\tilde{\mu}(T)}\langle \mu_0|\|_1\leq \epsilon\eta_T$, since $\|\ket{\mu_0}\|_2=1$.
    Finally, by Lemma~\ref{lemma.state.extract} we prepare the state $\ket{\mu_T}$. In total, we use 
    \begin{align}
        \mathcal{O}\left(\frac{1}{\Delta}\frac{1}{\eta_T}\alpha_V^2 T\frac{\log^2(\alpha_V T/(\epsilon\eta_T))}{\log\log(\alpha_VT/(\epsilon\eta_T))}\right)
    \end{align} 
    times of controlled-$U_V$ and controlled-$U_V^\dagger$.

    Regarding the number of additional 1- and 2-qubit gates, the scaling is basically the number of amplitude amplification rounds ($\mathcal{O}(\eta_T^{-1})$) $\times$ (number of gates in QSVT for the square root transformation in Lindbladian simulation(i.e., the order of the query complexity of $U_V$) + number of gates in Lindbladian simulation algorithm (Lemma \ref{thm.opensystem.independent})). For the number of ancilla qubits beyond those for block encodings, it simply follow the results of the Lindbladian simulation (Lemma \ref{thm.opensystem.independent}).
\end{proof}


Now we discuss the second input model: square root access model.

\begin{theorem}[Quantum ODE solver via Lindbladian, square root access model\label{alg.second}]
    Consider a time-independent, semi-dissipative, and homogeneous linear ordinary differential equation $\frac{d}{dt}\vec{\mu}(t)=-V\vec{\mu}(t)$ with $u(0)=u_0$, where $V\in \mathbb{C}^{2^n\times 2^n}$.
    Assume we are given a $1$-block encoding $U_{H_1,G_i}$ of $\alpha_V^{-1}|0\rangle\langle 0|\otimes H_1+\sum_i \alpha_V^{-1/2}|i\rangle\langle i|\otimes G_i$ such that 
    \begin{align}
        (\bra{0}_{a+b}\otimes I)U_{H,G}(\ket{0}_{a+b}\otimes I)&=\frac{H_1}{\alpha_V},\\
        (\bra{0}_{a}\bra{i}_b\otimes I)U_{H,G}(\ket{0}_{a}\ket{i}_b\otimes I)&=\frac{G_i}{\sqrt{\alpha_V}},
    \end{align}
    where $H_1=\frac{V-V^\dagger}{2i}$ and $\sum_{i=1}^D G_iG_i^\dagger =\frac{V+V^\dagger}{2}$. We also assume access to a quantum state preparation unitary $U_{u_0}:\ket{0}\rightarrow |\mu_0\rangle$.
    Then, there exists a quantum algorithm that outputs an $\epsilon$-approximation of $\ket{\mu_T}$ by using $\mathcal{O}\left(\eta_T^{-1}D\alpha_V T\frac{\log(D\alpha_V T/(\epsilon\eta_T))}{\log\log(D\alpha_VT/(\epsilon\eta_T))}\right)$ times of $U_{H_1,G_i}$, and $\mathcal{O}(\eta_T^{-1})$ times of $U_{\mu_0}$, where $\eta_T$ is the normalization factor of $\vec{\mu}(T)$. Additionally, the algorithm also uses $\mathcal{O}\left(\eta_T^{-1} D^2\alpha_{V} T \text{poly}\log(\eta_T^{-1}D\alpha_VT\epsilon^{-1})\right)$ additional 1- and 2-qubit gates, and $\mathcal{O}(\text{poly}\log(\eta_T^{-1}D\alpha_{V}T\epsilon^{-1}))$ number of additional ancilla qubits besides those used for block encodings.
\end{theorem}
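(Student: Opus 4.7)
The plan is to follow the same Lindbladian-plus-NDME pipeline used in Theorem~\ref{alg.first}, but to exploit the fact that the square root access model already supplies block encodings of $H_1$ and each $G_i$ directly. This removes the QSVT square root step that was responsible for the $\Delta^{-1}$ factor and the extra $\log(1/\epsilon)$ overhead, which is exactly the source of the improvement in the claim.

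First I would turn the single input unitary $U_{H_1,G_i}$ into the constituent building blocks required by Section~\ref{sec.encode.odetolind}. By projecting the $b$-register of $U_{H_1,G_i}$ onto $|0\rangle_b$ I obtain an $(\alpha_V,a+b,0)$-block encoding of $H_1$, and by projecting onto $|i\rangle_b$ I obtain a $(\sqrt{\alpha_V},a+b,0)$-block encoding of each $G_i$ (here the factor $\sqrt{\alpha_V}$ arises because $G_i$ is scaled by $\alpha_V^{-1/2}$ inside the input). I then lift these to block encodings of $H = \mathrm{diag}(H_1,0)$ and $F_i = \mathrm{diag}(G_i,0)$ exactly as in the first proof, using Lemma~\ref{product.blockencoding} and the elementary block encoding of $|0\rangle\langle 0|$ obtained from $(I+Z)/2$. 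The normalization factors are preserved. Crucially, no polynomial approximation is used here, so no $\Delta$ appears.

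Next I would apply the time-independent Lindbladian simulation result Lemma~\ref{thm.opensystem.independent} to the initial state $\rho_0=|+\rangle\langle +|\otimes |\mu_0\rangle\langle \mu_0|$, prepared with one call to $U_{\mu_0}$ and a Hadamard. The relevant embedding constant is
\begin{equation}
\alpha_{\mathcal{L}}=\alpha_V+\frac{1}{2}\sum_{i=1}^D(\sqrt{\alpha_V})^2=\alpha_V+\frac{D\alpha_V}{2}=\mathcal{O}(D\alpha_V),
\end{equation}
so the simulation produces a purification $|\tilde{S}_T\rangle$ whose reduced state is $\epsilon\eta_T$-close in trace norm to $\rho_T$ using
\begin{equation}
\mathcal{O}\!\left(D\alpha_V T\,\frac{\log(D\alpha_V T/(\epsilon\eta_T))}{\log\log(D\alpha_V T/(\epsilon\eta_T))}\right)
\end{equation}
queries to $U_H$ and $U_{F_i}$, each of which is built from $\mathcal{O}(1)$ calls to $U_{H_1,G_i}$. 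The error-propagation argument is identical to Theorem~\ref{alg.first}: I restrict to the NDME block via $(\langle 0|\otimes I_n)\,\cdot\,(|1\rangle\otimes I_n)$, which can only decrease the trace norm, and then invoke Lemma~\ref{lemma.norm.error} with the unit vector $|\mu_0\rangle$ on the right to convert a trace-norm error $\epsilon\eta_T$ on $\eta_T|\mu_T\rangle\langle\mu_0|$ into a $2$-norm error $\epsilon\eta_T$ on $\eta_T|\mu_T\rangle$, and hence an $\epsilon$-error on the normalized state after extraction.

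Finally, because $|\mu_T\rangle$ is encoded with amplitude $\eta_T/\sqrt{2}$ inside $|S_T\rangle$, I apply the Grover-type extraction of Lemma~\ref{lemma.state.extract} (built from $\mathcal{O}(\eta_T^{-1})$ rounds of amplitude amplification as in Lemma~\ref{aa}). Multiplying the simulation cost by the amplification factor gives the claimed $\widetilde{\mathcal{O}}(\eta_T^{-1}D\alpha_V T\,\log(\cdot)/\log\log(\cdot))$ queries to $U_{H_1,G_i}$ and $\mathcal{O}(\eta_T^{-1})$ queries to $U_{\mu_0}$. The additional gate counts and ancilla counts follow by substituting this amplified cost into the gate and ancilla bounds of Lemma~\ref{thm.opensystem.independent} with $m=D$. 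The main technical subtlety, rather than a genuine obstacle, is bookkeeping two things carefully: (i) checking that the projection of $U_{H_1,G_i}$ onto $|i\rangle_b$ commutes properly with the $\mathrm{diag}(\cdot,0)$ lifting so that the Lindbladian is of the block-diagonal form required by Section~\ref{sec.encode.odetolind}; and (ii) setting the internal simulation precision $\epsilon'$ small enough (on the order of $\epsilon\eta_T/(D\alpha_V T)$) so that, after amplification, the final $2$-norm error on $|\mu_T\rangle$ is bounded by $\epsilon$, and verifying that this only introduces polylogarithmic overhead consistent with the $\widetilde{\mathcal{O}}$ notation.
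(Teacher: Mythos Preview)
Your proposal is correct and follows essentially the same approach as the paper: extract block encodings of $H_1$ and each $G_i$ from $U_{H_1,G_i}$, lift them to the block-diagonal form $H=\mathrm{diag}(H_1,0)$ and $F_i=\mathrm{diag}(G_i,0)$, apply Lemma~\ref{thm.opensystem.independent} with $\alpha_{\mathcal{L}}=\alpha_V+\tfrac{1}{2}\sum_{i=1}^D\alpha_V=\mathcal{O}(D\alpha_V)$, and finish with Lemma~\ref{lemma.state.extract}. The paper's proof is terser and omits your error-propagation and precision bookkeeping, but the structure and all key ingredients are identical.
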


\begin{proof}
    The algorithm follows similarly to Theorem~\ref{alg.first}, but is greatly simplified due to the input model.
    One can construct $U_H$ and $U_F$ with the same step in Theorem~\ref{alg.first}, which are $(\alpha_V,a_V+1,0)$-block encodings of $H$ and $(\sqrt{\alpha_V},a_V+1,0)$-block encodings of $F$ respectively, where $H$ and $F$ is as Eq.~\ref{eq.jumpop}.
    We set the initial state $\rho_0=\frac{1}{2}(\ket{0}|\mu_0\rangle+\ket{1}|\mu_0\rangle)(\bra{0}\langle \mu_0|+\bra{1}\langle \mu_0|)$. 
    Then we use Theorem~\ref{thm.opensystem.independent} to construct a purification of $\rho_T$, where $\|\mathcal{L}\|_{\mathrm{be}}=\alpha_V+\frac{1}{2}\sum_{i=1}^D \alpha_V=\mathcal{O}(D\alpha_V)$. Finally, we use Lemma~\ref{lemma.state.extract} to extract the state $\ket{\mu_T}$. A combination of all these procedures gives the complexity in the theorem. Also, since we get rid of the use of QSVT, the number of additional gates will simply be the number of the amplitude amplification rounds $\times$ the number of gates in Lindbladian simulation (Lemma \ref{thm.opensystem.independent})
\end{proof}

\subsubsection{Time-dependent case}

We continue to show for the time-dependent and homogeneous case, with benefit from the previous results on time-dependent Lindbladian simulation \cite{he2024efficientoptimalcontrolopen}.
We first describe the two input models we consider in this work.

\textit{Time-dependent direct access}.
For $0\leq t\leq T$, where $T$ is the target evolution time, let $\alpha_V\geq \max_t \|V(t)\|$.
Assume we are given access to $(\alpha_V,a,0)$-block encoding $U_{V(t)}$ of matrix $V(t)$, i.e.,
\begin{align}
    (\bra{0}_a\otimes I)U_{V(t)}(\ket{0}_a\otimes I)=\sum_{t} |t\rangle\langle t|\otimes \frac{V(t)}{\alpha_V}.
\end{align}
We also assume $\alpha_V$ to be known classically, which enables us to construct the following oracles
\begin{align}
O_{\alpha_{V}}\ket{z}&=\ket{z\oplus \alpha_V^{-1}},\\
O_{H,\mathrm{norm}}\ket{z}&=\ket{z\oplus \alpha_V},\\
O_{G,\mathrm{norm}}\ket{z}&=\ket{z\oplus \sqrt{\alpha_V}}.
\end{align}

\textit{Time-dependent square root access}.
For $0\leq t\leq T$, where $T$ is the target evolution time, 
assume we are given access to $1$-block encoding $U_{H_1(t),G_i(t)}$ of $\alpha_V^{-1}|0\rangle\langle 0|\otimes H_1(t)+\sum_i \alpha_V^{-1/2}|i\rangle\langle i|\otimes G_i(t)$, i.e.,
    \begin{align}
        (\bra{0}_{a+b}\otimes I)U_{H_1(t),G_i(t)}(\ket{0}_{a+b}\otimes I)&=\sum_{t} |t\rangle\langle t|\otimes\frac{H_1(t)}{\alpha_V},\\
        (\bra{0}_{a}\bra{i}_b\otimes I)U_{H_1(t),G_i(t)}(\ket{0}_{a}\ket{i}_b\otimes I)&=\sum_{t} |t\rangle\langle t|\otimes\frac{G_i(t)}{\sqrt{\alpha_V}},
    \end{align}
    where $H_1(t)=\frac{V(t)-V^\dagger(t)}{2i}$ and $\sum_{i=1}^D G_i^\dagger(t)G_i(t) =\frac{V(t)+V^\dagger(t)}{2}$.

\begin{corollary}[Quantum algorithm for the time-dependent ODE, direct access model]
    Consider a time-dependent, semi-dissipative, and homogeneous linear ordinary differential equation $\frac{d}{dt}\vec{\mu}(t)=-V(t)\vec{\mu}(t)$, where $V\in \mathbb{C}^{2^n\times 2^n}$.
    Assume we are given access to the time-dependent direct access and quantum state preparation unitary $U_{\mu_0}:\ket{0}\rightarrow \ket{\mu_0}$.
    Suppose for the Hermitian part of the matrix $V(t)$, the smallest non-zero eigenvalue is lower bounded by $\Delta>0$.
    Then there exists a quantum algorithm that outputs an $\epsilon$-approximation of $\ket{\mu_T}$ by using $\mathcal{O}\left(\Delta^{-1}\eta_T^{-1}\alpha_V^2 T\frac{\log^3(\alpha_V T/(\epsilon\eta_T))}{\log^2\log(\alpha_VT/(\epsilon\eta_T))}\right)$ times of $U_{V(t)}$, and $\mathcal{O}(\eta_T^{-1})$ times of $U_{\mu_0}$, where $\alpha_V\geq \max_{t} \|V(t)\|$ and $\eta_T$ is the normalization factor of $\vec{\mu}(T)$. Additionally, the algorithm also uses $\mathcal{O}\left(\eta_T^{-1} \left(\Delta^{-1}\alpha_V^2 T\text{poly}\log(\eta_T^{-1}\alpha_V T\epsilon^{-1})+n\alpha_{V} T \text{poly}\log(\eta_T^{-1}(\alpha_V+\beta_V)T\epsilon^{-1})\right)\right)$ additional 1- and 2-qubit gates, and $\mathcal{O}(\text{poly}\log(\eta_T^{-1}(\alpha_V+\beta_V)T\epsilon^{-1}))$ number of additional ancilla qubits besides those used for block encodings, where we define $\beta_V=\max_{t \in [0,T]} \|\frac{\partial( V(t)-V(t)^\dagger)}{\partial t}\| +  \max_{t\in[0,T]} \|\frac{\partial \sqrt{V(t)+V(t)^\dagger}}{\partial t}\|$.
\end{corollary}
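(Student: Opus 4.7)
The plan is to mirror the proof of Theorem~\ref{alg.first}, replacing the time-independent Lindbladian simulation with its time-dependent counterpart (Lemma~\ref{thm.opensystem.dependent.simple}) while reusing the QSVT-based construction of the square-root block encoding. First, from the time-dependent direct-access oracle $U_{V(t)}$ I would form $(\alpha_V,a+1,0)$-block encodings of $A(t)=(V(t)-V(t)^\dagger)/(2i)$ and $B(t)=(V(t)+V(t)^\dagger)/2$ via linear combinations using a single controlled-$U_{V(t)}$ and controlled-$U_{V(t)}^\dagger$; these remain diagonal in the time register. Next, I would apply QSVT (Lemma~\ref{theorem.qsvt}) together with the odd square-root polynomial of Lemma~\ref{poly.sqrt} to obtain a $(2\sqrt{\alpha_V},a+O(1),\epsilon')$-block encoding of the single jump operator $G(t)=\sqrt{B(t)}$, where $\epsilon'=\mathcal{O}(\epsilon\eta_T/T)$, at a cost of $\mathcal{O}((\alpha_V/\Delta)\log(\alpha_V T/(\epsilon\eta_T)))$ queries to $U_{V(t)}$. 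Using an odd polynomial is crucial, as in Theorem~\ref{alg.first}, to avoid introducing a spurious energy shift when $B(t)$ has zero eigenvalues.

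With $H(t)=|0\rangle\langle 0|\otimes A(t)$ and $F(t)=|0\rangle\langle 0|\otimes G(t)$ block-encoded (via Lemma~\ref{product.blockencoding}), and the initial state $\rho_0=\frac{1}{2}(|0\rangle|\mu_0\rangle+|1\rangle|\mu_0\rangle)(\bra{0}\langle\mu_0|+\bra{1}\langle\mu_0|)$, the calculation of Sec.~\ref{sec.encode.odetolind} encodes $\frac{1}{2}\eta_T|\mu_T\rangle\langle\mu_0|$ into the top-right block of the evolved density matrix. I would invoke Lemma~\ref{thm.opensystem.dependent.simple} with $\alpha_{\mathcal{L}}=\mathcal{O}(\alpha_V)$ and trace-distance target $\epsilon\eta_T$, producing a purification at a cost of $\widetilde{\mathcal{O}}(\alpha_V T \log^2(\alpha_V T/(\epsilon\eta_T))/\log^2\log(\alpha_V T/(\epsilon\eta_T)))$ queries to $U_{H(t)}$ and $U_{F(t)}$. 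Multiplying by the per-query QSVT overhead for $U_{F(t)}$ yields the claimed $\widetilde{\mathcal{O}}(\Delta^{-1}\alpha_V^2 T \log^3/\log^2\log)$ query complexity to $U_{V(t)}$. Finally, Lemma~\ref{lemma.state.extract} extracts $|\mu_T\rangle$ using $\mathcal{O}(\eta_T^{-1})$ rounds of amplitude amplification on the Lindbladian simulation oracle and $U_{\mu_0}$, and Lemma~\ref{lemma.norm.error} converts the trace-distance bound on the top-right block (exploiting $\|\,|\mu_0\rangle\|_2=1$) into the desired $\epsilon$-error on the normalized vector $|\mu_T\rangle$.

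The main obstacle is verifying that QSVT applied to a time-dependent block encoding cleanly realizes the square-root transformation of $B(t)$ coherently across the entire time register with a single set of phase factors. This works because the HAM-T-style encoding of Ref.~\cite{low2018hamiltonian} is diagonal in $|t\rangle\langle t|$, so each time slice undergoes the same singular-value transformation independently, and the polynomial bound of Lemma~\ref{poly.positivepower} holds uniformly over $t$ under the assumption $\lambda_{\min}^{\neq 0}(B(t))\geq \Delta$. A secondary technicality is the auxiliary parameter $\beta_V=\max_t\|\partial_t A(t)\|+\max_t\|\partial_t\sqrt{B(t)+V(t)^\dagger}\|$ appearing in Lemma~\ref{thm.opensystem.dependent.simple}: by standard matrix-function derivative bounds (e.g., Daleckii–Krein for the square root under the spectral-gap assumption $\Delta$) $\beta_V$ remains finite, and it enters only through poly-logarithmic factors in the additional gate count and ancilla count as stated in the corollary, without affecting the leading query scaling.
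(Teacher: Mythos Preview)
Your proposal is correct and follows essentially the same approach as the paper: mirror Theorem~\ref{alg.first} while swapping in the time-dependent Lindbladian simulation Lemma~\ref{thm.opensystem.dependent.simple}, with the QSVT square-root construction carried out coherently across the HAM-T time register and the final extraction via Lemma~\ref{lemma.state.extract}. The paper's own proof is in fact terser than yours (it simply cites Theorem~\ref{alg.first} and Lemma~\ref{thm.opensystem.dependent.simple} and notes that $\beta_V$ instantiates $\beta_{\mathcal L}$), so your added remarks on the diagonal-in-$|t\rangle$ structure and on bounding $\partial_t\sqrt{B(t)}$ are welcome elaborations rather than deviations; just correct the typo in your expression for $\beta_V$ (the second term should involve $\partial_t\sqrt{B(t)}$, not $\sqrt{B(t)+V(t)^\dagger}$).
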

\begin{proof}
    Following the same steps as Theorem~\ref{alg.first} and adapting the Linbladian simulation part with Lemma~\ref{thm.opensystem.dependent.simple}, one can achieve the claim.
    This is true because Lemma~\ref{thm.opensystem.dependent.simple} as shown in Ref~\cite{he2024efficientoptimalcontrolopen} achieves the time-dependent open system simulation based on truncated Dyson series, and for each discretized $t$, we can construct the needed block encodings. Also note that the definition of $\beta_V$ simply follows the definition of $\beta_{\mathcal{L}}$ in Lemma \ref{thm.opensystem.dependent.simple}.
\end{proof}

\begin{corollary}[Quantum algorithm for the time-dependent ODE, square root access model]
    Consider a time-dependent, semi-dissipative, and homogeneous linear ordinary differential equation $\frac{d}{dt}\vec{\mu}(t)=-V(t)\vec{\mu}(t)$, where $V\in \mathbb{C}^{2^n\times 2^n}$.
    Assume we are given access to the time-dependent square root access and quantum state preparation unitary $U_{\mu_0}:\ket{0}\rightarrow \ket{\mu_0}$.
    Then there exists a quantum algorithm that outputs an $\epsilon$-approximation of $\ket{\mu_T}$ by using $\mathcal{O}\left(\eta_T^{-1}D\alpha_V T\frac{\log^2(D\alpha_V T/(\epsilon\eta_T))}{\log^2\log(D\alpha_VT/(\epsilon\eta_T))}\right)$ times of $U_{H_1(t),G_i(t)}$, and $\mathcal{O}(\eta_T^{-1})$ times of $U_{\mu_0}$, where $\alpha_V\geq \max_t \|V(t)\|$ and $\eta_T$ is the normalization factor of $\vec{\mu}(T)$. Additionally, the algorithm also uses $\mathcal{O}\left(\eta_T^{-1} (n+D)D\alpha_{V} T \text{poly}\log(\eta_T^{-1}(D\alpha_V+\beta_V')T\epsilon^{-1})\right)$ additional 1- and 2-qubit gates, and $\mathcal{O}(\text{poly}\log(\eta_T^{-1}(D\alpha_V+\beta_V')T\epsilon^{-1}))$ number of additional ancilla qubits besides those used for block encodings, where we define $\beta_V'=\max_{t \in [0,T]} \|\frac{\partial H_1(t)}{\partial t}\| + \sum_{i=1}^D \max_{t\in[0,T]} \|\frac{\partial G_i(t)}{\partial t}\|$
\end{corollary}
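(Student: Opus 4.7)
The plan is to mirror the proof of Theorem~\ref{alg.second} (the time-independent square root access case), replacing the time-independent Lindbladian simulation primitive with the time-dependent one (Lemma~\ref{thm.opensystem.dependent.simple}). Since the input model already supplies block encodings of the square root factors $G_i(t)$ directly, we avoid entirely the QSVT-based construction of $\sqrt{B(t)}$ and the associated $1/\Delta$ overhead that appears in the direct access corollary. The overall pipeline is: (i) lift the given square root block encoding to block encodings of the block-diagonal Lindbladian operators $H(t)$ and $F_i(t)$ from Section~\ref{sec.encode.odetolind}; (ii) initialize $\rho_0=\tfrac12(\ket{0}+\ket{1})(\bra{0}+\bra{1})\otimes\ket{\mu_0}\!\bra{\mu_0}$; (iii) invoke the time-dependent Lindbladian simulation to produce a purification of $\rho_T$; (iv) use amplitude amplification (Lemma~\ref{lemma.state.extract}) to extract $\ket{\mu_T}$.

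First I would use Lemma~\ref{product.blockencoding} together with the trivial $(1,1,0)$-block encoding of $\ket{0}\!\bra{0}$ (a linear combination of $I$ and $Z$) to produce, from $U_{H_1(t),G_i(t)}$, an $(\alpha_V,\cdot,0)$-HAM-T block encoding of $H(t)=\ket{0}\!\bra{0}\otimes H_1(t)$ and a $(\sqrt{\alpha_V},\cdot,0)$-HAM-T block encoding of each $F_i(t)=\ket{0}\!\bra{0}\otimes G_i(t)$. Since $\alpha_V$ is classical, the auxiliary oracles $O_{\alpha_{\mathcal{L}'}}$, $O_{H,\mathrm{norm}}$, $O_{F_j,\mathrm{norm}}$ required by the time-dependent Lindbladian simulation are trivial to realize, so we may apply the simplified version (Lemma~\ref{thm.opensystem.dependent.simple}) with overall embedding factor $\alpha_{\mathcal{L}}=\alpha_V+\tfrac12\sum_{i=1}^D\alpha_V=\mathcal{O}(D\alpha_V)$. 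Running the simulation to error $\epsilon\eta_T/4$ yields a purification $\ket{\tilde S_T}$ of $\tilde\rho_T$ with $\|\tilde\rho_T-\rho_T\|_1\le \epsilon\eta_T/4$ at a cost of
\begin{equation}
\mathcal{O}\!\left(D\alpha_V T\,\frac{\log^2(D\alpha_V T/(\epsilon\eta_T))}{\log^2\log(D\alpha_V T/(\epsilon\eta_T))}\right)
\end{equation}
queries to $U_{H_1(t),G_i(t)}$.

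Next, the error analysis is exactly as in Theorem~\ref{alg.first}: the upper-right $n$-qubit block of $\rho_T$ equals $\tfrac12\eta_T\ket{\mu_T}\!\bra{\mu_0}$, so projecting both $\rho_T$ and $\tilde\rho_T$ onto the $(\ket{0},\bra{1})$ sector and invoking Lemma~\ref{lemma.norm.error} (with $\vec\mu=\ket{\mu_0}$, $\|\ket{\mu_0}\|_2=1$) turns the trace-norm closeness of purifications into a $\|\cdot\|_2$ bound between $\eta_T\ket{\mu_T}$ and its approximation. Finally, applying Lemma~\ref{lemma.state.extract} performs amplitude amplification on the extracted block to normalize it into $\ket{\mu_T}$, using $\mathcal{O}(\eta_T^{-1})$ rounds and hence $\mathcal{O}(\eta_T^{-1})$ calls to both the Lindbladian simulation subroutine and $U_{\mu_0}$. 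Multiplying the per-round query count by $\eta_T^{-1}$ gives the stated query complexity.

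The step that requires the most care is the bookkeeping in the last paragraph of the claim: the additional $1$- and $2$-qubit gate count and the ancilla overhead. These inherit directly from Lemma~\ref{thm.opensystem.dependent.simple}, where the gate count is linear in $(m+n)\alpha_{\mathcal{L}'}T\,\mathrm{poly}\log((\alpha_{\mathcal{L}'}+\beta_{\mathcal{L}})T/\epsilon)$ and the ancilla count is $\mathrm{poly}\log((\alpha_{\mathcal{L}'}+\beta_{\mathcal{L}})T/\epsilon)$; substituting $m=D$, the error budget $\epsilon\eta_T$, and defining $\beta_V'=\max_t\|\partial_t H_1(t)\|+\sum_i\max_t\|\partial_t G_i(t)\|$ analogously to $\beta_{\mathcal{L}}$, then multiplying by the $\mathcal{O}(\eta_T^{-1})$ amplitude amplification factor reproduces the stated auxiliary costs. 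The only subtle point to verify is that the block-diagonal lifting in step (i) does not inflate $\beta_{\mathcal{L}}$ beyond $\beta_V'$, which follows because $\partial_t H(t)$ and $\partial_t F_i(t)$ are block-diagonal with the nontrivial block equal to $\partial_t H_1(t)$ or $\partial_t G_i(t)$ respectively, preserving operator norm.
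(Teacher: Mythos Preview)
Your proposal is correct and follows exactly the approach the paper takes: the paper's proof is a one-liner stating ``Following the same steps as Theorem~\ref{alg.second} and adapting the Lindbladian simulation part with Lemma~\ref{thm.opensystem.dependent.simple}, one can achieve the claim,'' and you have simply spelled out those steps in detail. Your bookkeeping of $\alpha_{\mathcal{L}}=\mathcal{O}(D\alpha_V)$, the error propagation via Lemma~\ref{lemma.norm.error}, and the identification $\beta_{\mathcal{L}}\leadsto\beta_V'$ are all consistent with how the paper derives the analogous quantities in the time-independent case and the direct-access time-dependent corollary.
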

\begin{proof}
    Following the same steps as Theorem.~\ref{alg.second} and adapting the Linbladian simulation part with Lemma~\ref{thm.opensystem.dependent.simple}, one can achieve the claim.
\end{proof}

\section{Lower bound discussion\label{sec:lowb}}
In this section, we discuss the complexity lower bound of quantum DOE solvers with respect to $\eta_T$, $\alpha_V T$, and $\epsilon$ under different input models. The basic result is:
\begin{theorem}[Complexity lower bound]
The complexity lower bound under the square root access model is the same as the direct access model.
\end{theorem}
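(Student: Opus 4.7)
The plan is to adapt the three known lower-bound constructions for the direct access model directly into the square root access format, showing that each hard family can be presented in both models with matching $\alpha_V$, $T$, $\eta_T$, $\epsilon$. The three bounds to be transferred are $\Omega(T)$ from no fast-forwarding \cite{berry2007efficient}, $\Omega(\eta_T^{-1})$ from state discrimination \cite{an2023theoryquantumdifferentialequation}, and $\Omega(\log(1/\epsilon)/\log\log(1/\epsilon))$ from parity checking \cite{berry2014exponential}.

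For each lower bound I would inspect the hard family $\{V(t)\}$ used in its proof and write down its explicit decomposition $V(t) = iH_1(t) + \frac{1}{2}\sum_i G_i(t)^\dagger G_i(t)$, verifying that the induced block encoding $U_{H_1(t),G_i(t)}$ has the same normalization as the original $U_{V(t)}$. For the no-fast-forwarding instance the family consists of purely Hamiltonian ODEs ($V = iH$, so $H_1 = -H$ and $G_i = 0$), and the required $U_{H_1, G_i}$ is literally the block encoding of $H$ already used in the original argument. For the state-discrimination instance, one may take the hard family to be purely dissipative and diagonal, in which case $H_1 = 0$ and $G = \sqrt{V}$ has a closed-form block encoding with the same $\alpha_V$. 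For the parity-checking instance an analogous analytic decomposition can be found by choosing a diagonal hard family in which the Hermitian and anti-Hermitian parts decouple.

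Having supplied each hard family in the square root format with matching parameters, I would repeat the original distinguishing argument in the new oracle model. Each of the three lower-bound proofs is information-theoretic and tracks how much information about the hidden parameter (for example, the parity bit, or which of two instances is being simulated) can be extracted from $q$ oracle calls; since the square root oracle $U_{H_1, G_i}$ acts on an ancilla register of the same asymptotic size as $U_V$ and encodes the same hidden parameter, each query carries the same $O(1)$ information content, and the same polynomial-method or adversary-method argument yields the same $\Omega(\cdot)$ dependence on $T$, $\eta_T$, and $\epsilon$. As a consistency check one can note that Lemma \ref{product.blockencoding} combined with LCU builds $U_V$ from $U_{H_1, G_i}$ in $O(1)$ queries on the identity $V = iH_1 + \frac{1}{2}\sum_i G_i^\dagger G_i$, so the information content per square root query is at most a constant factor larger than that of a single direct-access query, preventing any artificial savings.

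The main obstacle will be realizing the parity-checking family in the square root format with $\alpha_V$ unchanged. For the purely Hamiltonian and purely dissipative hard instances the matching is immediate, but the parity construction in \cite{berry2014exponential} is tailored to a specific eigenvalue structure, so one may need either an explicit computation of $\sqrt{(V+V^\dagger)/2}$ for the particular $V$ appearing there or a substitute hard family of the same asymptotic complexity that admits an analytically tractable square root while preserving all of $\alpha_V$, $T$, $\eta_T$, and $\epsilon$.
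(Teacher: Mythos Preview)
Your plan is essentially the paper's own argument: recast each hard family in the square-root format and observe that the information-theoretic lower bound carries over unchanged. The paper's proof is short precisely because all three bounds reduce to instances where the square-root oracle coincides with the direct-access oracle.

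Your ``main obstacle'' is a phantom. The parity-checking lower bound of \cite{berry2014exponential}, like the no-fast-forwarding bound, is a statement about \emph{Hamiltonian simulation}: the hard family has $V=iH$ with $H$ Hermitian, so $B=(V+V^{\dagger})/2=0$ and there is no square root to take. The square-root oracle is just the block encoding of $H_1=H$ with all $G_i=0$, exactly as you already noted for the $T$ bound. Both the $\alpha_V T$ and the $\epsilon$ lower bounds therefore transfer immediately; there is no eigenvalue structure to analyze. For the $\eta_T^{-1}$ bound on oracle queries, the paper's construction is sharper than your generic ``diagonal dissipative'' family: it sets $V=(U_{\mathrm{sear}}+I)/2$, the Grover reflection shifted to a projector. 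Since a projector is its own square root, $G=\sqrt{V}=V$ and $\alpha_V=\sqrt{\alpha_V}=1$, so the square-root oracle is literally the same unitary as the direct-access oracle and the Grover optimality argument applies verbatim. The paper also distinguishes the $\eta_T^{-1}$ bound on \emph{state-preparation} queries (state discrimination, independent of the $V$ model) from the $\eta_T^{-1}$ bound on \emph{$V$-oracle} queries (Grover search); you blur these two, though it does not affect the conclusion.
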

\begin{proof}
For $\eta_T$, both the queries to $V(t)$ and the queries to $|\mu_0\rangle$ (initial state preparation oracle) have a lower bound dependence $\mathcal{O}(\eta_T^{-1})$ under the first input model. For $|\mu_0\rangle$, this lower bound is related to using ODE solver for the state discrimination
task \cite{an2023theoryquantumdifferentialequation}. For $V(t)$, this lower bound is related to the optimality of the Grover speedup for unstructured database searching \cite{fang2023time,bennett1997strengths}. Both reasons come from the linearity of quantum mechanics \cite{abrams1998nonlinear,childs2016optimal}. Under the second input model, the same lower bound holds for both $V(t)$ and $|\mu_0\rangle$. For $|\mu_0\rangle$, since the lower bound comes from the distinguishability between two non-orthogonal quantum states, it has nothing to do with the input model of $V(t)$, thus the lower bound is still $\mathcal{O}(\eta_T^{-1})$. For $V(t)$, given an unstructured searching oracle:
$$U_{sear}|x_0\rangle=|x_0\rangle\text{, }U_{sear}|x\rangle=-|x\rangle\text{ for $x\neq x_0$},$$
with $U_{sear}$ unitary and Hermitian, we can use LCU to have a block encoding of $(U_{sear}+I)/2$ which has eigenvalues of either 0 or 1. Under the first input model, we can let $V=(U_{sear}+I)/2$, and run the algorithm for the searching task, which will give the $\mathcal{O}(\eta_T^{-1})$ lower bound as shown in Ref. \cite{fang2023time}. Under the second input model, since $\sqrt{(U_{sear}+I)/2}=(U_{sear}+I)/2$, this has no difference from the first, thus, the same lower bound holds.

Under the first input model, both $\alpha_V T$ and $\epsilon$ have lower bounds with respect to queries of $V(t)$: $\mathcal{O}(\alpha_V T)$ and $\mathcal{O}(\frac{\log(\epsilon^{-1})}{\log\log(\epsilon^{-1})})$. Both lower bounds of $\alpha_V T$ and $\epsilon$ come from the no fast-forwarding theorem \cite{berry2007efficient,berry2014exponential} related to the complexity of determining the
parity of bits \cite{beals2001quantum}. These lower bounds come from the Hamiltonian simulation of $e^{-iHt}$ with $H$ given by the sparse Hamiltonian access oracle \cite{berry2014exponential, low2019hamiltonian}. Unlike the Hermitian part where we need the square root function, the query of $H_1(t)=-i(V(t)-V(t)^\dag)/2$ is equivalent to the query on $V(t)$ with the aid of LCU. Thus, the lower bounds come from the sparse Hamiltonian simulation naturally hold for the second input model.

Since the oracle $U_{H_1(t),G_i(t)}$ is a combination of $H_1(t)$ and $\{G_i(t)\}$, the lower bound of $\eta_T$ from $\{G_i(t)\}$ and the lower bounds of $\alpha_VT$ and $\epsilon$ from $H_1(t)$ are inherited by $U_{H_1(t),G_i(t)}$, and thus it has the same lower bounds for all parameters as the first input model.
\end{proof}
\section{Relations between Lindbladians and non-Hermitian physics\label{sec:rela}}
The construction in our work can be considered as a bridge between non-Hermitian physics and open quantum systems, which further implies the importance of our work. In recent years, there have been many studies on exploring non-Hermitian effects in open quantum systems \cite{song2019non,roccati2022non,minganti2019quantum}. Typically, in these works, for the Lindbladian
$$\frac{d\rho}{dt}=\mathcal{L}[\rho]=-i[H,\rho]+
\sum_i \left(G_i\rho G_i^\dag-
\frac{1}{2}\{\rho,G_i^\dag G_i\}\right),$$
the authors will consider an effective non-Hermitian Hamiltonian $H_{eff}$ with the form
$$H_{eff}=H-\frac{i}{2}\sum_i G_i^\dag G_i,$$ 
to characterize the dynamics of Lindbladian. Various phenomena around this non-Hermitian $H_{eff}$ have been investigated. However, this effective Hamiltonian can only have a good approximation to the exact dynamics in the semi-classical limit \cite{monkman2024limits} or for short times (i.e., before the occurrence of a jump). We can understand this from the following stochastic differential equation (SDE) \cite{jacobs2006straightforward,ding2024simulating}
\begin{equation}\label{ssdde}
d|\psi_t\rangle=-i H_{eff}|\psi_t\rangle+\sum_i G_i |\psi_t\rangle dW_{i,t},
\end{equation}
with $dW_{i,t}$ independent Wiener processes. One can prove that the average dynamics of this SDE is
\begin{equation}
\frac{d\mathbb{E}[|\psi_t\rangle\langle \psi_t|]}{dt}-i[H,\mathbb{E}[|\psi_t\rangle\langle \psi_t|]]+
\sum_i \left(G_i\mathbb{E}[|\psi_t\rangle\langle \psi_t|] G_i^\dag-
\frac{1}{2}\{\mathbb{E}[|\psi_t\rangle\langle \psi_t|],G_i^\dag G_i\}\right),
\end{equation}
which is exactly the form of the Lindbladian. Only at a short time (before the occurrence of the Wiener jumps in Eq. \ref{ssdde}), the SDE can be reduced to the Schrodinger equation with $H_{eff}$ as the non-Hermtian Hamiltonian. In contrast, in our work, for the first time, we show that by carefully designing Lindbladians, we can recover the exact dynamics of $H_{eff}$ in a non-diagonal block of a density matrix, indicating a new relation between effective Hamiltonian and Lindbladian, which combining with our algorithm implementation, makes the quantum computer a promising test bed to explore non-Hermitian physics.

\end{appendix}
\end{document}